\title{The Fine-Grained Complexity of Boolean Conjunctive Queries and Sum-Product Problems}
\titlerunning{The Fine-Grained Complexity of BCQs and Sum-Product Problems}
\author{Austen Z. Fan}{Department of Computer Sciences, University of Wisconsin-Madison, United States}{afan@cs.wisc.edu}{https://orcid.org/0000-0001-7714-2195}{}
\author{Paraschos Koutris}{Department of Computer Sciences, University of Wisconsin-Madison, United States}{paris@cs.wisc.edu}{https://orcid.org/0000-0001-6309-1702}{}
\author{Hangdong Zhao}{Department of Computer Sciences, University of Wisconsin-Madison, United States}{hangdong@cs.wisc.edu}{https://orcid.org/0009-0009-7636-0831}{}
\authorrunning{A.\,Z. Fan, P. Koutris and H. Zhao} 
\keywords{Fine-grained complexity, conjunctive queries, semiring-oblivious reduction} 
\definecolor{shadecolor}{gray}{0.75}
\definecolor{mycolor}{rgb}{0.122, 0.435, 0.698}
\newcommand{\mybox}[1]{%
  \setbox0=\hbox{#1}%
  \setlength{\@tempdima}{\dimexpr\wd0+13pt}%
  \begin{tcolorbox}[colframe=mycolor,boxrule=0.5pt,arc=4pt,
      left=6pt,right=6pt,top=3pt,bottom=3pt,boxsep=0pt,width=\@tempdima]
    #1
  \end{tcolorbox}
}
\newcommand{\emb}[0]{\mathsf{emb}}
\newcommand{\mH}{\mathcal{H}}
\newcommand{\bx}{\mathbf{x}}
\newcommand{\ba}{\mathbf{a}}
\newcommand{\fhw}{\mathsf{fhw}}
\newcommand{\subw}{\mathsf{subw}}
\newcommand{\adw}{\mathsf{adw}}
\newcommand{\PANDA}{\mathsf{PANDA}}
\newcommand{\bags}{\mathsf{bags}}
\newcommand{\htree}{\mathcal{T}}
\newcommand*{\defeq}{\stackrel{\text{def}}{=}}
\begin{document}

\maketitle



\begin{abstract}
We study the fine-grained complexity of evaluating Boolean Conjunctive Queries and their generalization to sum-of-product problems over an arbitrary semiring. For these problems, we present  a general \emph{semiring-oblivious} reduction from the $k$-clique problem to any query structure (hypergraph). Our reduction uses the notion of {\em embedding} a graph to a hypergraph, first introduced by Marx~\cite{Marx13}. As a consequence of our reduction, we can show tight conditional lower bounds for many classes of hypergraphs, including cycles, Loomis-Whitney joins, some bipartite graphs, and chordal graphs. These lower bounds have a dependence on what we call the {\em clique embedding power} of a hypergraph $H$, which we believe is a quantity of independent interest. We show that the clique embedding power is always less than the submodular width of the hypergraph, and present a decidable algorithm for computing it. We conclude with many open problems for future research.
\end{abstract}

\section{Introduction}

In a seminal paper, Marx proved the celebrated result that $\operatorname{CSP}(\mathcal{H})$ is fixed-parameter tractable (FPT) if and only if the hypergraph $\mH$ has a bounded submodular width~\cite{Marx13}. In the language of database theory, a Boolean Conjunctive Query (BCQ) can be identified as the problem of $\operatorname{CSP}(\mH)$ where $\mH$ is the hypergraph associated with the query~\cite{grohe2007complexity}. Thus, Marx's result implies that a class of Boolean Conjunctive Queries is FPT if and only if its submodular width is bounded above by some universal constant. Built on this result, Khamis, Ngo, and Suciu introduced in~\cite{PANDA} the $\mathsf{PANDA}$ (Proof-Assisted eNtropic Degree-Aware) algorithm, which can evaluate a BCQ\footnote{Technically, the $\PANDA$ algorithm works for Boolean or full CQs.} in time $\tilde{O}(|I|^{\mathsf{subw}(H)})$, where $|I|$ is the input size and $\mathsf{subw}(\mH)$ is the submodular width of $\mH$ (here $\tilde{O}$ hides polylogarithmic factors). Remarkably, the running time of $\mathsf{PANDA}$ achieves the best known running time of \emph{combinatorial algorithm}\footnote{Informally speaking, this requires the algorithm does not leverage fast matrix multiplication techniques} for \emph{all} BCQs. It is thus an important open question whether there exists a faster combinatorial algorithm than $\mathsf{PANDA}$ for some Boolean CQ.

To show that large submodular width implies not being FPT, Marx introduced the notion of an \emph{embedding}, which essentially describes a reduction from one $\operatorname{CSP}$ problem to another. Our key insight in this work is that we can apply the notion of an embedding to measure how well {\em cliques} of different sizes can be embedded to a hypergraph $\mH$. By taking  the supremum over all possible clique sizes, we arrive at the definition of \emph{clique embedding power}, denoted $\emb(\mH)$. The use of cliques as the starting problem means that we can use popular lower bound conjectures in fine-grained complexity (the Boolean $k$-Clique conjecture, the Min-Weight $k$-Clique conjecture) to obtain (conditional) lower bounds for the evaluation of BCQs that depend on $\emb(\mH)$.

Equipped with the new notion of the clique embedding power, we can show tight lower bounds for several classes of queries. That is, assuming the Boolean $k$-Clique Conjecture, we derive (conditional) lower bounds for many queries that meet their submodular width, and therefore the current best algorithm, up to polylogarithmic factors. In particular, we show that for cycles~\cite{DBLP:journals/algorithmica/AlonYZ97}, Loomis-Whitney joins~\cite{NgoPRR18}, and chordal graphs, among others, the current combinatorial algorithms are optimal.

We further extend the embedding reduction to be independent of the underlying (commutative) semiring\footnote{A triple $(\mathbf{D}, \oplus, \otimes, \mathbf{0}, \mathbf{1})$ is a commutative semiring if $\oplus$ and $\otimes$ are commutative binary operators over $\mathbf{D}$ with the following properties:
$(i)$ $(\mathbf{D}, \oplus)$ is a commutative monoid with an additive identity $\mathbf{0}$. 
$(ii)$ $(\mathbf{D}, \otimes)$ is a commutative monoid with a multiplicative identity $\mathbf{1}$. 
$(iii)$ $\otimes$ distributes over $\oplus$. 
$(iv)$ For any element $e \in \mathbf{D}$, we have $e \otimes \mathbf{0} = \mathbf{0}  \otimes e = \mathbf{0} $.}. It was observed by Green, Karvounarakis, and Tannen~\cite{GreenKT07} that the semantics of CQs can be naturally generalized to sum-of-product operations over a semiring. This point of view unifies a number of database query semantics that seem unrelated. For example, evaluation over set semantics corresponds to evaluation over the Boolean semiring $\sigma_\mathbb{B}= (\{0,1\},\vee, \wedge,0,1)$, while bag semantics corresponds to the semiring $(\mathbb{N},+, \times,0,1)$. Interestingly, following this framework, the decision problem of finding a $k$-clique in a graph can be interpreted as the following sum-of-product operation: consider the input graph $G = (V,E)$ as the edge-weighted graph of the complete graph with $|V|$ vertices where $\mathsf{weight}(e) = \mathbbm{1}_{e\in E}$; then the problem is to compute $\bigvee_{{V' \subseteq V : |V'| = k}} \bigwedge \mathsf{w}(\{v,w\})$. Observe that by changing the underlying semiring to be the tropical semiring $\mathsf{trop} = (\mathbb{R}^\infty, \min, +, \infty,0)$, this formulation computes the min-weight $k$-clique problem. Indeed, given an edge-weighted graph (where the weight of non-existence edges is 0), the minimum weight of its $k$-clique is exactly $\min_{{V' \subseteq V: |V'| = k}} \sum \mathsf{w}(\{v,w\})$. We prove that the clique embedding reduction is \emph{semiring-oblivious}, i.e., the reduction holds for arbitrary underlying semirings. This enables one to transfer the lower bound result independent of the underlying semiring and should be of independent interest.

Recent years have witnessed emerging interests in proving lower bounds for the runtime of database queries (see Durand~\cite{survey} for a wonderful survey). Casel and Schmid consider the fine-grained complexity of regular path queries over graph databases~\cite{CaselS21}. 
Joglekar and R{\'{e}} prove a full dichotomy for whether a 1-series-parallel graph admits a subquadratic algorithm~\cite{JoglekarR18}. Their proof is based on the hardness hypothesis that 3-XOR  cannot be solved in subquadratic time. Perhaps the line of work in spirit closest to ours is the characterization of queries which can be enumerated by linear preprocessing time and constant delay~\cite{BaganDG07,CarmeliS22,Nofar22}. However, their results focus on the enumeration problem and therefore are different from the main subject of our paper. Furthermore, their characterization mainly classifies queries based on the existence of a linear preprocessing time and constant delay algorithm. In contrast, our method can provide a lower bound for \emph{every} query. 

\subparagraph*{Our Contributions} We summarize our contributions as follows:
\begin{itemize}
\item We introduce the notion of the {\em clique embedding power} $\emb(\mH)$ of a hypergraph $\mH$  (Section~\ref{sec:emb}). We show several interesting properties of this notion; most importantly, we show that it is always upper-bounded by the submodular width, $\subw(\mH)$. This connection can be seen as additional evidence of the plausibility of the lower bound conjectures for the $k$-clique problem.
\item We show how to construct a reduction from the $k$-clique problem to any hypergraph $\mH$ for any semiring, and discuss how the clique embedding power provides a lower bound for its running time (Section~\ref{sec:lower:bounds}).
\item We study how to compute $\emb(\mH)$ (Section~\ref{sec: decidability}). In particular, we prove that it is a decidable problem, and give a Mixed Integer Linear Program formulation. One interesting consequence of this formulation is that to achieve $\emb(\mH)$ it suffices to consider clique sizes that depend on the hypergraph size.
\item We identify several classes of hypergraphs for which $\emb(\mH) = \subw(\mH)$ (Section~\ref{sec:tight}). For these classes of queries, our lower bounds match the best-known upper bounds if we consider the Boolean semiring with combinatorial algorithms or the tropical semiring. The most interesting class of hypergraphs we consider is the class of {\em chordal hypergraphs} (which captures chordal graphs).

\item Finally, we identify a hypergraph with six vertices for which there is a gap between its clique embedding power and submodular width (Section~\ref{sec:boat}). We believe that the existence of this gap leaves many open questions.
\end{itemize}
\section{Background}

In this section, we define the central problem, and notions necessary for our results.

\subparagraph*{The SumProduct Problem}
We define this general problem following the notation in~\cite{FAQ,FAQAI}.
Consider $\ell$ variables $x_1, x_2, \dots, x_\ell$, where each variable $x_i$ takes values in some discrete domain $\mathsf{Dom}(x_i)$. A {\em valuation} $v$ is a function that maps each $x_i$ to $\mathsf{Dom}(x_i)$. For a subset $S \subseteq [\ell]$, we define the tuple $\bx_S = (x_i)_{i \in S}$ and $v(\bx_S) = (v(x_i))_{i \in S}$.

The SumProduct Problem is parameterized by:
\begin{enumerate}
\item a commutative semiring $\sigma = (\mathbf{D}, \oplus, \otimes,\mathbf{0}, \mathbf{1})$, where $\mathbf{D}$ is a fixed domain.  
\item a hypergraph $\mH = (V,E)$ where $V = [\ell]$.
\end{enumerate}

The input $I$ specifies for every hyperedge $e \in E$ a function $R_e: \prod_{i \in e} \mathsf{Dom}(x_i) \rightarrow \mathbf{D}$. This function is represented in the input as a table of all tuples of the form $(\ba_e, R_e(\ba_e))$, such that  $R_e(\ba_e) \neq \mathbf{0}$. This input representation is standard in the CSP and database settings. We use $|I|$ to denote the input size, which is simply the sum of sizes of all tables in the input.

The SumProduct Problem then asks to compute the following function:
$$  \bigoplus_{v: \text{valuation}} \bigotimes_{e \in E} R_e(v(\bx_e)).$$
We will say that $v$ is a {\em solution} for the above problem if $\bigotimes_{e \in E} R_e(v(\bx_e)) \neq \mathbf{0}$.

Within this framework, we can capture several important problems depending on the choice of the semiring and the hypergraph. If we consider the Boolean semiring  $\sigma_\mathbb{B}= (\{0,1\},\vee, \wedge,0,1)$, then each $R_e$ behaves as a relational instance ($R_e$ is $1$ if the tuple is in the instance, otherwise $0$) and the SumProduct function captures Boolean Conjunctive Query evaluation. If $\sigma = (\mathbb{N}, +, \times,0,1)$ and $R_e$ is defined as above, then the SumProduct function computes the number of solutions to a Conjunctive Query. Another important class of problems is captured when we consider the min-tropical semiring $\mathsf{trop} = (\mathbb{R}^\infty, \min, +, \infty,0)$ and we assign each tuple to a non-negative weight; this computes a minimum weight solution that satisfies the structural properties. 

\subparagraph*{The Complexity for SumProduct Problems} We adopt the {\it random-access machine (RAM)} as our computation model with $O(\log n)$-bit
words, which is standard in fine-grained complexity. The machine has read-only input registers and it contains the database and the query, read-write work memory registers, and write-only output registers. It is assumed that each register can store any tuple, and each tuple is stored in one register. The machine can perform all ``standard''~\footnote{This includes all arithmetic (e.g. $+, -, \div, *$) and logical operations.} operations on one or two registers in constant time.

In this paper, we are interested in the computational complexity of a SumProduct problem for a fixed hypergraph $\mH$. (This is typically called {\em data complexity}). We will consider two different ways of treating semirings when we think about algorithms.

In the first variant, we fix the semiring $\sigma$ along with the hypergraph $\mH$. This means that the representation of the semiring is not part of the input and is known a priori to the algorithm. We denote this problem as $\mathsf{SumProd}\langle \sigma, \mH \rangle$.
In the second variant, we consider algorithms that access the semiring only via an oracle. In particular, the algorithm does not know the semiring a priori and can only access it during runtime by providing the values for the $\oplus, \otimes$ operations. We assume that each of these operations takes a constant amount of time. We denote this problem as $\mathsf{SumProd}\langle \mH \rangle$.

Our goal in this paper is to specify the exact exponent of $|I|$ in the polynomial-time runtime cost of an algorithm that computes  $\mathsf{SumProd}\langle \sigma, \mH \rangle$ or $\mathsf{SumProd}\langle \mH \rangle$.

\subparagraph*{Tree Decompositions}
A {\em tree decomposition} of a hypergraph $\mH$ is a pair $(\htree,\chi)$, where $\htree$ is a tree and $\chi$ maps each node $t \in V(\htree)$ of the tree to a subset $\chi(t)$ of $V(\mH)$ such that:
\begin{enumerate} 
\item every hyperedge $e \in E(\mH)$ is a subset of $\chi(t)$ for some  $t \in V(\htree)$; and 
\item for every vertex $v \in V(\mH)$, the set $\{t \mid v \in \chi(t) \}$ is a non-empty connected subtree of $\htree$. 
\end{enumerate}

We say that a hypergraph $\mH$ is \textit{acyclic} if it has a tree decomposition such that each bag corresponds to a hyperedge.

\subparagraph*{Notions of Width}
Let $\mH$ be a hypergraph and $F$ be a set function over $V(H)$. The $F$-width of a tree decomposition $(\htree,\chi)$ is defined as $\max_{t} F(\chi(t))$. The $F$-width of $\mH$ is the minimum $F$-width over all possible tree decompositions of $\mH$.

A {\em fractional independent set} of a hypergraph $\mH$ is a mapping $\mu : V(\mH) \rightarrow [0,1]$ such that $\sum_{v \in e} \mu(v) \leq 1$ for every $e \in E(\mH)$. We naturally extend functions on the vertices of $\mH$ to subsets of vertices of $\mH$ by setting $\mu(X) = \sum_{v \in X} \mu(v)$.

The {\em adaptive width} $\mathsf{adw}(\mH)$ of a hypergraph $\mH$ is defined as the supreme of $F\text{-width}(\mH)$, where $F$ goes over all fractional independent sets of $\mH$. Hence if $\mathsf{adw}(\mH) \leq w$, then for every $\mu$, there exists a tree decomposition of $\mH$ with $\mu$-width at most $w$. 

A set function $F$ is {\em submodular} if for any two sets $A,B$ we have $F(A \cup B) + F(A \cap B) \leq F(A) + F(B)$. It is monotone if whenever $A \subseteq B$, then $F(A) \leq F(B)$. The {\em submodular width} $\subw(\mH)$ of a hypergraph $\mH$ is defined as the supreme of $F\text{-width}(\mH)$, where $F$ now ranges over all non-negative, monotone, and submodular set functions over $V(\mH)$ such that for every hyperedge $e \in E(\mH)$, we have $F(e) \leq 1$. A non-negative, monotone, and submodular set function $F$ is \textit{edge-dominated} if $F(e) \leq 1$, for every $e \in E$.

The \textit{fractional hypertree width} of a hypergraph $\mH$ is $\fhw(\mH) = \min _{(\htree, \chi)} \max _{t \in V(\htree)} \rho^*(\chi(t))$, where $\rho^*$ is the minimum fractional edge cover number of the set $\chi(t)$. It  holds that $ \mathsf{adw}(\mH) \leq  \mathsf{subw}(\mH) \leq \fhw(\mH)$.

It is known that $\mathsf{SumProd} \langle \sigma_{\mathbb{B}}, \mH\rangle$ can be computed in time $\tilde{O}(|I|^{\mathsf{subw}(\mH)})$ using the $\PANDA$ algorithm~\cite{PANDA}. However, we do not know of a way to achieve the same runtime for the general $\mathsf{SumProd} \langle  \mH\rangle$ problem. For this, the best known runtime is  $\tilde{O}(|I|^{\mathsf{\#\subw}(\mH)})$, where $\subw(\mH) \leq \#\subw(\mH) \leq \fhw(\mH)$~\cite{FAQAI}.
On the other hand, there are hypergraphs for which we can compute  $\mathsf{SumProd} \langle \sigma_{\mathbb{B}}, \mH\rangle$ with runtime better than $\tilde{O}(|I|^{\mathsf{subw}(\mH)})$ using non-combinatorial algorithms. For example, if $\mH$ is a triangle we can obtain a runtime $\tilde{O}(|I|^{2\omega/(\omega+1)})$, where $\omega$ is the matrix multiplication exponent (the submodular width of the triangle is $3/2$).

\subparagraph*{Conjectures in Fine-Grained Complexity} Our lower bounds will be based on the following popular conjectures in fine-grained complexity. To state the conjectures, it will be helpful to define the {\em $k$-clique problem over a semiring $\sigma$}: given an undirected graph $G = (V,E)$ where each edge has a weight in the domain of the semiring, we are asked to compute the semiring-product over all the $k$-cliques in $G$, where the weight of each clique is the semiring-sum of clique edge weights.

\begin{definition}[Boolean $k$-Clique Conjecture]
 There is no real $\epsilon >0$ such that computing the $k$-clique problem (with $k \geq 3$) over the Boolean semiring in an (undirected) $n$-node graph requires time $O(n^{k-\epsilon})$ using a combinatorial algorithm.
 \end{definition}

\begin{definition}[Min-Weight $k$-Clique Conjecture]
 There is no real $\epsilon >0$ such that computing the $k$-clique problem (with $k \geq 3$) over the tropical semiring in an (undirected) $n$-node graph with integer edge weights can be done in time $O(n^{k-\epsilon})$.
\end{definition}

When $k=3$, min-weight 3-clique is equivalent to the All-Pairs Shortest Path (APSP) problem under subcubic reductions. The Min-Weight Clique Conjecture assumes the Min-Weight $k$-Clique conjecture for every integer $k \geq 3$ (similarly for the Boolean Clique Conjecture).

\section{The Clique Embedding Power} \label{sec:emb}

In this section, we define the clique embedding power,  a quantity central to this paper. 

\subsection{Graph Embeddings}

We introduce first the definition of embedding a graph $G$ to a hypergraph $\mH$, first defined by Marx~\cite{Marx13,Marx10}. We say that two sets of vertices $X,Y \subseteq V(\mH)$ {\em touch} in $\mH$ if either $X \cap Y \neq \emptyset$ or there is a hyperedge $e \in E(\mH)$ that intersects both $X$ and $Y$. We say a hypergraph is connected if its underlying clique graph is connected.

\begin{definition}[Graph Embedding]
Let $G$ be an undirected graph, and $\mH$ be a hypergraph. An {\em embedding} from $G$ to $\mH$, denoted $G \mapsto \mH$, is a mapping $\psi$ that maps every vertex $v \in V(G)$ to a non-empty subset $\psi(v) \subseteq V(\mH)$ such that the following hold:
\begin{enumerate}
\item $\psi(v)$ induces a connected subhypergraph; 
\item if $u,v \in V(G)$ are adjacent in $G$, then $\psi(u), \psi(v)$ {\em touch} in $\mH$.
\end{enumerate}
\end{definition}

It will often be convenient to describe an embedding $\psi$ by the reverse mapping $\psi^{-1}(x) = \{i \mid x \in \psi(i) \}$, where $x$ is a vertex  in $V(\mH)$. Given an embedding $\psi$ and a vertex $v \in V(\mH)$, we define its {\em vertex depth} as $d_\psi(v) = |\psi^{-1}(v)|$. For a hyperedge $e \in E(\mH)$, we define its {\em weak edge depth} as $d_{\psi}(e) = |\{v \in V(G) \mid  \psi(v) \cap e \neq \emptyset \}|$, i.e., the number of vertices of $G$ that map to some variable in $e$. Moreover, we define the {\em edge depth} of $e$ as $d^+_{\psi}(e) = \sum_{v \in e} d_\psi(v)$.

The {\em weak edge depth} of an embedding $\psi$ can then be defined as $\mathsf{wed}(\psi) = \max_e d_\psi(e)$, and the edge depth as $\mathsf{ed}(\psi) = \max_e d^+_\psi(e)$.  Additionally, we define as $\mathsf{wed}(G \mapsto \mH)$ the minimum weak edge depth of any embedding $\psi$ from $G$ to $\mH$. Similarly for  $\mathsf{ed}(G \mapsto \mH)$. It is easy to see that $\mathsf{wed}(G \mapsto \mH) \leq \mathsf{ed}(G \mapsto \mH)$.

It will be particularly important for our purposes to think about embedding the $k$-clique graph $C_k$ to an arbitrary hypergraph $\mH$. In this case, it will be simpler to think of the vertices of $G$ as the numbers $1, \dots, k$ and the embedding $\psi$ as a mapping from the set $\{1, \dots, k\}$ to a subset of $V(\mH)$. We can now define the following quantity, which captures how well we can embed a $k$-clique to $H$ for an integer $k \geq 3$: 
$$\mathsf{emb}_k(\mH) :=  \frac{k}{\mathsf{wed}(C_k \mapsto \mH)}.$$

\begin{example}
Consider the hypergraph $\mH$ with the following hyperedges:
$$ \{x_1, x_2, x_3\}, \{x_1, y\}, \{x_2, y\}, \{x_3, y\}$$
We can embed the $5$-clique into $\mH$ as follows:
$$ 1 \rightarrow \{ x_1\}, 2 \rightarrow \{x_2\}, 3  \rightarrow \{x_3\}, 4,5 \rightarrow \{y\}. $$
It is easy to check that this is a valid embedding, since, for example, $1,4$ touch at the edge $\{x_1, y\}$. Moreover, $\mathsf{wed}(C_5 \mapsto G) = 3$, hence $\mathsf{emb}_5(G) = 5/3$.
\end{example}

\subsection{Embedding Properties}

In this part, we will explore how $\mathsf{wed}(C_k \mapsto \mH)$ and $\emb_k(\mH)$ behave as a function of the size of the clique $k$. We start with some basic observations.

\begin{proposition}\label{prop:emb:properties}
For any hypergraph $\mH$ and integer $k \geq 3$:
\begin{enumerate}
\item $\mathsf{wed}(C_k \mapsto \mH) \leq k$.
\item $\mathsf{wed}(C_k \mapsto \mH)  \leq \mathsf{wed}(C_{k+1} \mapsto H) \leq \mathsf{wed}(C_k \mapsto \mH)  +1.$
\item If $k = m \cdot n$, where $k,m,n \in \mathbb{Z}_{\geq 0}$, then $\mathsf{emb}_k(\mH) \geq \mathsf{emb}_m(\mH)$.
\end{enumerate}
\end{proposition}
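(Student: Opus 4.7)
The three parts can be handled independently, each by exhibiting a concrete embedding and bounding its weak edge depth; the core idea in every case is to start from the trivial embedding or from an optimal embedding of a smaller clique and patch it up.

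For part 1, the plan is to observe that the constant map $\psi(i) = \{v\}$ for some fixed $v \in V(\mH)$ is a valid embedding of $C_k$: a single vertex induces a (trivially) connected subhypergraph, and any two images share the vertex $v$ and hence touch. The weak edge depth of any hyperedge is at most $k$ (the number of clique vertices), giving $\mathsf{wed}(C_k \mapsto \mH) \leq k$.

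For part 2, both inequalities are obtained by surgery on an optimal embedding. For the left inequality, take an optimal $\psi: C_{k+1} \mapsto \mH$ and restrict it to any $k$-subset of the vertices of $C_{k+1}$; the resulting embedding of $C_k$ has weak edge depth no larger than that of $\psi$, since removing a clique vertex can only decrease the count $d_\psi(e)$ for each $e$. For the right inequality, take an optimal $\psi: C_k \mapsto \mH$ and extend it to $C_{k+1}$ by setting $\psi'(k+1) := \psi(1)$. This is a valid embedding: $\psi(1)$ is connected, $\psi'(k+1)$ intersects $\psi(1)$, and for $i \geq 2$ the sets $\psi(1)$ and $\psi(i)$ already touch since $1,i$ were adjacent in $C_k$. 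For every hyperedge $e$, the weak edge depth grows by at most $1$ (only when $\psi(1) \cap e \neq \emptyset$, in which case vertex $k+1$ contributes), yielding the $+1$ bound.

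For part 3, the idea is a clique blow-up: given an optimal embedding $\psi : C_m \mapsto \mH$, partition the vertex set of $C_{mn} = C_k$ into $m$ groups of size $n$ and define $\psi'(v) := \psi(i)$ whenever $v$ lies in group $i$. Validity is immediate (same-group images are identical and hence intersect; cross-group images touch because their group indices are adjacent in $C_m$). For each hyperedge $e$, exactly $n \cdot |\{i : \psi(i) \cap e \neq \emptyset\}|$ clique vertices map into $e$, so $\mathsf{wed}(\psi') = n \cdot \mathsf{wed}(\psi)$, giving $\mathsf{wed}(C_k \mapsto \mH) \leq n \cdot \mathsf{wed}(C_m \mapsto \mH)$. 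Dividing $k = mn$ by both sides yields $\mathsf{emb}_k(\mH) \geq \mathsf{emb}_m(\mH)$.

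None of the three parts presents a genuine obstacle; the construction for part 3 is the most conceptually interesting piece, and it is essentially the only place where the multiplicative structure of $k = mn$ is used. The main thing to be careful about is verifying both conditions of the embedding definition (connectedness of images and the touching condition) for each constructed map, which the above constructions make transparent.
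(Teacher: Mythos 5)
Your proposal is correct and follows essentially the same route as the paper: part 1 by a trivial constant embedding, part 2 by restricting or extending an optimal embedding, and part 3 by the same blow-up that bundles $n$ clique vertices per vertex of $C_m$. The only (harmless) deviation is that in parts 1 and 2 you map to a single vertex and to a copy of $\psi(1)$, respectively, where the paper maps to all of $V(\mH)$; your choice even sidesteps the implicit requirement that $V(\mH)$ induce a connected subhypergraph.
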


\begin{proof}
(1) We define an embedding $\psi$ from a $k$-clique where $\psi(i) = V(\mH)$ for every $i =1, \dots, k$. It is easy to see that $\psi$ is an embedding with weak edge depth $k$.

\smallskip

(2) For the first inequality, take any $\psi_{k+1}$, we can construct a $\psi_{k}$ by only preserving the mapping $\psi_{k+1}$ for $[k]$. Then, for any $e \in E(\mH)$, we have 
$$
\{y \in V(C_k) \mid \psi_k(y) \cap e \neq \emptyset\}  \subseteq \{y \in V(C_{k+1}) \mid \psi_{k+1}(y) \cap e \neq \emptyset\}
$$
Thus, 
$$
\mathsf{wed}(C_k \mapsto \mH) \leq \mathsf{wed}(\psi_{k}) := \max_{e \in E(\mH)} d_{\psi_k}(e) \leq \mathsf{wed}(\psi_{k+1}).
$$
For the second inequality, 
take any $\psi_{k}$, we construct a $\psi_{k+1}$ by preserving the mapping $\psi_{k}$ and $\psi_{k+1}$ maps $k+1$ to $V(\mH)$. Then, for any $e \in E(\mH)$, we have 
$$
d_{\psi_{k+1}}(e) = d_{\psi_k}(e) + 1
$$
so $\mathsf{wed}(\psi_{k+1}) = \mathsf{wed}(\psi_{k}) + 1$ and in particular, we can take $\psi_{k}$ such that 
$$
\mathsf{wed}(\psi_{k+1}) \leq  \mathsf{wed}(C_k \mapsto \mH) + 1
$$
which implies that 
$$
\mathsf{wed}(C_{k+1} \mapsto \mH) \leq \mathsf{wed}(C_k \mapsto \mH)+1
$$
\smallskip
(3) Suppose $\psi$ is the embedding that achieves $\mathsf{emb}_m(\mH)$ for $C_m$. It suffices to construct an embedding $\psi'$ for $C_k$ which achieves the same quantity $\mathsf{emb}_m(\mH)$. To do so, we simply bundle every $n$ vertices in $C_k$ to be a ``hypernode''. That is, label the bundles as $b_1, \dots, b_n$. and $\psi'(v) = \psi(i)$ if and only if $v \in B_i$. The embedding power given by $\psi'$ is then 
$$\frac{k}{\mathsf{wed}(\psi')} = \frac{mn}{\mathsf{wed}(\psi)n} = \frac{m}{\mathsf{wed}(\psi)} = \mathsf{emb}_m(\mH).$$
\end{proof}

%
%

The first item of the above proposition tells us that $\emb_k(\mH) \geq 1$ for any $k$.
But how does $\emb_k(\mH)$ behave as $k$ grows? We next show that $\emb_k(\mH)$ is always upper bounded by the submodular width of $\mH$.

\begin{lemma} \label{lem:decomp}
Let $\mH$ be a hypergraph. Take an embedding $\psi: C_k \mapsto \mH$. Let $(\htree, \chi)$ be a tree decomposition of $\mH$. Then, there exists a node $t \in T$ such that for every $i=1, \dots, k$, $\psi(i) \cap \chi(t) \neq \emptyset$.
\end{lemma}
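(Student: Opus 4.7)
The plan is to exploit the Helly property of subtrees of a tree: any family of pairwise intersecting subtrees of a tree has a common vertex. I will apply this to the family $\{T_i\}_{i=1}^k$, where $T_i := \{t \in V(\htree) \mid \psi(i) \cap \chi(t) \neq \emptyset\}$ is the set of bags meeting $\psi(i)$.

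First, I would show each $T_i$ is a nonempty connected subtree of $\htree$. Nonemptiness is immediate since every vertex of $\mH$ lies in at least one bag and $\psi(i) \neq \emptyset$. For connectivity, note that $T_i = \bigcup_{v \in \psi(i)} T^v$, where $T^v := \{t \mid v \in \chi(t)\}$ is a connected subtree by the second axiom of tree decompositions. If two vertices $u,v \in \psi(i)$ lie in a common hyperedge $e$ of $\mH[\psi(i)]$, then the first tree-decomposition axiom gives a bag $\chi(t)$ containing $e$, so $t \in T^u \cap T^v$; hence $T^u \cup T^v$ is connected. Chaining this along a path in the (connected) induced subhypergraph on $\psi(i)$ shows $T_i$ is connected.

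Next, I would establish pairwise intersection: for every $i \neq j$, $T_i \cap T_j \neq \emptyset$. Since the vertices $i,j$ are adjacent in $C_k$, the images $\psi(i)$ and $\psi(j)$ touch in $\mH$. In the first case of touching, pick any $v \in \psi(i) \cap \psi(j)$; then any bag in $T^v$ lies in both $T_i$ and $T_j$. In the second case, let $e \in E(\mH)$ be a hyperedge intersecting both $\psi(i)$ and $\psi(j)$; by the first tree-decomposition axiom, some bag $\chi(t)$ contains $e$, and that bag meets both $\psi(i)$ and $\psi(j)$, so $t \in T_i \cap T_j$.

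Finally, having a family of pairwise intersecting subtrees of the tree $\htree$, I invoke the Helly property for subtrees of a tree to conclude $\bigcap_{i=1}^k T_i \neq \emptyset$. Any $t$ in this intersection is a bag with $\psi(i) \cap \chi(t) \neq \emptyset$ for all $i \in [k]$, as required. I do not anticipate a serious obstacle; the only delicate step is the connectivity argument for $T_i$, which hinges on carefully translating the connectivity of the subhypergraph induced by $\psi(i)$ into a walk of overlapping subtrees $T^v$.
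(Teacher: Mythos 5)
Your proposal is correct and follows essentially the same route as the paper's proof: define the subtrees $T_i$ of bags meeting $\psi(i)$, show each is connected using the connectivity of $\psi(i)$ and the tree-decomposition axioms, show pairwise intersection from the touching condition, and conclude via the Helly property for subtrees of a tree. No gaps.
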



\begin{proof}
For $i=1, \dots, k$, let $\htree_i$ be the subgraph of $\htree$ that includes all nodes $t \in V(\htree)$ such that $\psi(i) \cap \chi(t)  \neq \emptyset$. 

We first claim that {\em $\htree_i$ forms a tree}. To show this, it suffices to show that $\htree_i$ is connected. Indeed, take any two nodes $t_1, t_2$ in $\htree_i$. This means that there exists $x_1 \in \chi(t_1) \cap \psi(i)$ and $x_2 \in \chi({t_2}) \cap \psi(i)$. Since $x_1, x_2 \in \psi(i)$ and $\psi(i)$ induces a connected subgraph in $\mH$, there exists a sequence of vertices $x_1 =  z_1, \dots, z_k = x_2$, all in $\psi(i)$, such that every two consecutive vertices belong to an edge of $\mH$. Let $S_1, \dots, S_k$ be the trees in $\htree$ that contain $z_1, \dots, z_k$ respectively. Take any two consecutive $z_i, z_{i+1}$: since they belong to the same edge, there exists a bag that contains both of them, hence $S_i, S_{i+1}$ intersect. This means that there exists a path between $t_1, t_2$ in $\htree$ such that every node is in $T_i$.

Second, we claim that {\em any two $\htree_i, \htree_j$ have at least one common vertex}. Indeed, $\psi(i)$, $\psi(j)$ must touch in $\mH$. If there exists a variable $x \in \psi(i) \cap \psi(j)$, then any vertex that contains $x$ is a common vertex between $\htree_i, \htree_j$. Otherwise, there exists $x \in \psi(i)$, $y \in \psi(j)$ such that $x,y$ occur together in a hyperedge $e \in E(\mH)$. But this means that some node $t \in \htree$ contains both $x,y$, hence $\htree_i, \htree_j$ intersect at $t$.

Finally, we apply the fact that a family of subtrees of a tree satisfies the {\it Helly property}~\cite{Heggernes05}, i.e. {\em a collection of subtrees of a tree has at least one common node if and only if
every pair of subtrees has at least one common node.} Indeed, the trees $\htree_1, \dots, \htree_k$ satisfy the latter property, so there is a vertex $t$ common to all of them. Such $t$ has the desired property of the lemma. 
\end{proof}

We can now state the following \autoref{prop:lowerbound} on the embedding power of a hypergraph. 

\begin{theorem}\label{prop:lowerbound}
For any hypergraph $\mH$ and integer $k \geq 3$, the following holds:
$$\mathsf{wed}(C_k \mapsto \mH) \geq \frac{k}{\mathsf{subw}(\mH)}$$
\end{theorem}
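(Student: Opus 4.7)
The plan is to construct, from an arbitrary embedding $\psi: C_k \mapsto \mH$ of weak edge depth $w := \mathsf{wed}(\psi)$, a non-negative, monotone, submodular, edge-dominated set function $F_\psi$ on $V(\mH)$ whose $F_\psi$-width is at least $k/w$. Since $\mathsf{subw}(\mH)$ is defined as the supremum over all such $F$, this will yield $\mathsf{subw}(\mH) \geq k/w$, and hence $w \geq k/\mathsf{subw}(\mH)$; taking the minimum over $\psi$ gives the stated bound.

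The natural candidate is
\[
F_\psi(S) \;=\; \frac{1}{w}\,\bigl|\{\, i \in [k] : \psi(i) \cap S \neq \emptyset\,\}\bigr|.
\]
I would verify the three structural properties as follows. Monotonicity is immediate from the definition. Edge-domination is exactly the statement $F_\psi(e) = d_\psi(e)/w \leq 1$, which holds by the choice of $w = \max_e d_\psi(e)$. For submodularity, I would write $F_\psi = \tfrac{1}{w}\sum_i S_i$, where $S_i(X) = \mathbbm{1}[\psi(i) \cap X \neq \emptyset]$, and check the submodular inequality $S_i(A\cup B) + S_i(A \cap B) \leq S_i(A) + S_i(B)$ by a short case analysis on whether $\psi(i)$ meets $A \cap B$: if it does, all four indicators equal $1$; otherwise $S_i(A \cap B) = 0$ and one uses $S_i(A \cup B) \leq S_i(A) + S_i(B)$, which follows because $\psi(i)$ meeting $A \cup B$ forces it to meet $A$ or $B$.

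The final step is to bound $F_\psi$-width from below. Given any tree decomposition $(\htree, \chi)$ of $\mH$, \autoref{lem:decomp} furnishes a node $t^\star$ such that $\psi(i) \cap \chi(t^\star) \neq \emptyset$ for every $i = 1, \dots, k$. For this bag, every index $i$ contributes to the indicator count, so $F_\psi(\chi(t^\star)) = k/w$, and consequently the $F_\psi$-width of $(\htree, \chi)$ is at least $k/w$. Taking the minimum over all tree decompositions gives $F_\psi\text{-width}(\mH) \geq k/w$, and hence $\mathsf{subw}(\mH) \geq k/w$, which rearranges to the desired inequality.

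I do not expect any real obstacle in this argument: the only nontrivial ingredient is \autoref{lem:decomp}, which is already established. The rest is a clean verification that the indicator-count function $F_\psi$ lies in the class of functions over which the submodular width takes its supremum, the edge-domination condition being precisely what dictates the normalization by $w$.
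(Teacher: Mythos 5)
Your proposal is correct and is essentially identical to the paper's own proof: the same coverage-type function $F_\psi(S) = |\{i : \psi(i)\cap S \neq \emptyset\}|/w$, the same appeal to Lemma~\ref{lem:decomp} to find a bag meeting every $\psi(i)$, and the same conclusion that this bag has $F_\psi$-value $k/w$. The only difference is that you verify submodularity by hand where the paper simply notes that $F_\psi$ is a coverage function.
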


\begin{proof}
Let $\mathsf{wed}(C_k \mapsto \mH)=\alpha$. Then, there is an embedding $\psi: C_k \mapsto \mH$ with weak edge depth $\alpha$. We will show that $\mathsf{subw}(\mH) \geq k/\alpha$.

First, we define the following set function over subsets of $V(\mH)$: for any $S \subseteq V(\mH)$, let $\mu(S) = |\{i \mid \psi(i) \cap S \neq \emptyset\}|/\alpha$. This is a coverage function, and hence it is a submodular function. It is also edge-dominated, since for any hyperedge $e$, we have $\mu(e) =   |\{i \mid \psi(i) \cap e \neq \emptyset\}|/\alpha \leq 1$.

Now, consider any decomposition $(\htree, B_t)$ of $\mH$. From Lemma~\ref{lem:decomp}, there is a node $t \in \htree$ such that or every $i=1, \dots, k$, $\psi(i) \cap B_t \neq \emptyset$. Hence, $\mu(B_t) =  |\{i \mid \psi(i) \cap B_t \neq \emptyset\}|/\alpha = k/\alpha$. Thus, the submodular width of the decomposition is at least $k/\alpha$.
\end{proof}

The above result tells us that $\mathsf{emb}_k(\mH) \leq \mathsf{subw}(\mH)$ for any $k \geq 3$. Hence, taking the supremum of  $\mathsf{emb}_k(\mH)$ for $k \geq 3$ is well-defined since the set is bounded. This leads us to the following definition: 

\begin{definition}[Clique Embedding Power]
Given a hypergraph $\mH$, define  the {\em clique embedding power} of $\mH$ as  
$$\mathsf{emb}(\mH) := \sup_{k \geq 3} \mathsf{emb}_k(\mH) = \sup_{k \geq 3} \frac{k}{\mathsf{wed}(C_k \mapsto \mH)}.$$
\end{definition}

The following is immediate:

\begin{corollary}
For any hypergraph $\mH$, $1 \leq \emb(\mH) \leq \subw(\mH)$.
\end{corollary}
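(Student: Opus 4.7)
The corollary is stated as ``immediate,'' and indeed both inequalities fall out of the work already done, so the plan is essentially to combine \autoref{prop:emb:properties} with \autoref{prop:lowerbound} and pass to the supremum.

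For the lower bound $\emb(\mH) \geq 1$, I would invoke part (1) of \autoref{prop:emb:properties}: since $\mathsf{wed}(C_k \mapsto \mH) \leq k$ for every $k \geq 3$, we get $\emb_k(\mH) = k/\mathsf{wed}(C_k \mapsto \mH) \geq 1$ for every such $k$. Taking the supremum over $k \geq 3$ immediately gives $\emb(\mH) \geq 1$. In fact instantiating with $k = 3$ is already enough.

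For the upper bound $\emb(\mH) \leq \subw(\mH)$, I would apply \autoref{prop:lowerbound} directly: it states that $\mathsf{wed}(C_k \mapsto \mH) \geq k/\subw(\mH)$ for every $k \geq 3$. Rearranging yields $\emb_k(\mH) = k/\mathsf{wed}(C_k \mapsto \mH) \leq \subw(\mH)$ for every $k \geq 3$. Since this uniform bound holds for every term of the family, the supremum satisfies the same bound, i.e., $\emb(\mH) = \sup_{k\geq 3} \emb_k(\mH) \leq \subw(\mH)$. Note that this step also justifies that the supremum in the definition of $\emb(\mH)$ is finite, so the quantity is well-defined.

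There is no real obstacle here; the work has been done in \autoref{prop:lowerbound} (which in turn relied on \autoref{lem:decomp} via the Helly property of subtrees of a tree). The only thing worth being careful about is pointing out that both inequalities hold term-by-term in $k$ and therefore transfer to the supremum, and that the lower-bound direction does not require invoking the full submodular-width machinery --- it is a one-line consequence of the trivial embedding sending every clique vertex to the whole vertex set of $\mH$.
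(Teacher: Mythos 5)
Your proposal is correct and matches the paper's intended argument exactly: the paper declares the corollary ``immediate'' precisely because part (1) of \autoref{prop:emb:properties} gives $\emb_k(\mH)\geq 1$ and \autoref{prop:lowerbound} gives $\emb_k(\mH)\leq\subw(\mH)$ for every $k\geq 3$, so both bounds pass to the supremum. Nothing further is needed.
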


For the connection between edge depth width and adaptive width, we have the following theorem analogous to Theorem~\ref{prop:lowerbound}. The proof can be found in~\cite{arxiv}.

\begin{theorem} \label{thm:adw}
For any hypergraph $\mH$, the following holds:
$$\mathsf{ed}(C_k \mapsto \mH) \geq \frac{k}{\mathsf{adw}(\mH)}$$
\end{theorem}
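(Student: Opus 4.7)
The plan is to mirror the proof of \autoref{prop:lowerbound}, replacing the submodular edge-dominated set function by a fractional independent set built from the \emph{vertex depths} of the embedding, since adaptive width is defined via fractional independent sets rather than edge-dominated submodular functions. Starting from an embedding $\psi : C_k \mapsto \mH$ with $\mathsf{ed}(\psi) = \beta$, I would define $\mu : V(\mH) \to [0,1]$ by $\mu(v) = d_\psi(v)/\beta$.

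First I would verify that $\mu$ is a fractional independent set. For any hyperedge $e \in E(\mH)$,
\[
\sum_{v \in e} \mu(v) \;=\; \frac{1}{\beta}\sum_{v \in e} d_\psi(v) \;=\; \frac{d^+_\psi(e)}{\beta} \;\leq\; \frac{\mathsf{ed}(\psi)}{\beta} \;=\; 1,
\]
so $\mu$ lies in the class over which $\mathsf{adw}(\mH)$ takes its supremum.

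Next I would bound $\mu(\chi(t))$ from below for a well-chosen bag in an arbitrary tree decomposition $(\htree,\chi)$ of $\mH$. By Lemma~\ref{lem:decomp}, there exists a node $t \in V(\htree)$ such that $\psi(i) \cap \chi(t) \neq \emptyset$ for every $i = 1,\dots,k$. Swapping the order of summation in the definition of $\mu(\chi(t))$,
\[
\mu(\chi(t)) \;=\; \frac{1}{\beta} \sum_{v \in \chi(t)} |\psi^{-1}(v)| \;=\; \frac{1}{\beta}\sum_{i=1}^{k} |\psi(i) \cap \chi(t)| \;\geq\; \frac{k}{\beta},
\]
since each inner term is at least $1$. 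Hence the $\mu$-width of every tree decomposition of $\mH$ is at least $k/\beta$, which by definition of adaptive width gives $\mathsf{adw}(\mH) \geq k/\beta$, i.e., $\mathsf{ed}(C_k \mapsto \mH) \geq k/\mathsf{adw}(\mH)$.

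There is no real obstacle; the only subtlety (and the reason edge depth $d^+_\psi$ is the right notion to pair with adaptive width) is that edge-dominance of $\mu$ requires the sum $\sum_{v \in e} d_\psi(v)$ rather than the count $d_\psi(e)$ used in the submodular-width version. Correspondingly, one must be careful to use the identity $\sum_{v \in \chi(t)} |\psi^{-1}(v)| = \sum_i |\psi(i) \cap \chi(t)|$ to convert the bag weight into a sum over clique vertices, where the conclusion of Lemma~\ref{lem:decomp} can be applied.
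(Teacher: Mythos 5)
Your proposal is correct and follows essentially the same route as the paper's proof: both define the fractional independent set $\mu(v) = d_\psi(v)/\mathsf{ed}(\psi)$, check edge-dominance via $d^+_\psi(e)$, and apply Lemma~\ref{lem:decomp} to find a bag meeting every $\psi(i)$. Your explicit double-counting step $\sum_{v \in \chi(t)} |\psi^{-1}(v)| = \sum_i |\psi(i) \cap \chi(t)| \geq k$ just spells out the inequality the paper states more tersely.
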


%
%


\section{Lower Bounds} 
\label{sec:lower:bounds}

In this section, we show how to use the clique embedding power to obtain conditional lower bounds for SumProduct problems. Our main reduction follows the reduction used in~\cite{Marx13}, but also has to account for constructing the appropriate values for the semiring computations.

\begin{theorem}\label{thm:reduction}
For any hypergraph $\mH$ and semiring $\sigma$, if  $\mathsf{SumProd}\langle \sigma, \mH \rangle$ can be solved in time $O(|I|^c)$ with input $I$, then $k$-Clique over $\sigma$ can be solved in time $O(n^{c \cdot \mathsf{wed}(C_k \mapsto \mH)})$ where $n$ is the number of vertices.
\end{theorem}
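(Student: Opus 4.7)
The plan is to turn an $n$-vertex edge-weighted input graph $G$ for the $k$-clique problem over $\sigma$ into an instance $I$ of $\mathsf{SumProd}\langle\sigma,\mH\rangle$ of input size $O(n^d)$, where $d = \mathsf{wed}(C_k \mapsto \mH)$, and whose sum-product value coincides with the $k$-clique value of $G$. Running the hypothetical $O(|I|^c)$ algorithm then yields an $O(n^{cd})$-time algorithm for $k$-clique over $\sigma$. The structure mirrors Marx's classical embedding reduction, with the extra wrinkle that we must place semiring weights in the relations so that they combine correctly under $\otimes$.

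Fix an embedding $\psi\colon C_k \mapsto \mH$ achieving the minimum weak edge depth $d$. For each variable $x \in V(\mH)$, take $\mathsf{Dom}(x)$ to be the set of tuples of $G$-vertices indexed by $\psi^{-1}(x)$; thus a value $v(x)$ has components $v(x)_i \in V(G)$ for each $i$ with $x \in \psi(i)$. For every clique edge $\{i,j\} \in E(C_k)$, by definition of embedding $\psi(i)$ and $\psi(j)$ touch in $\mH$, so we can pick once and for all a \emph{witness} hyperedge $e(i,j) \in E(\mH)$ that either contains a vertex of $\psi(i)\cap\psi(j)$ or meets both $\psi(i)$ and $\psi(j)$. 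Let $E_e = \{\{i,j\} : e(i,j)=e\}$ and $S_e = \{i \in [k] : \psi(i) \cap e \neq \emptyset\}$; by definition of weak edge depth, $|S_e| \leq d$.

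For a local assignment $\bar v_e$ to the variables in $e$, call it \emph{locally consistent} if for every $i \in S_e$ all $x \in e \cap \psi(i)$ agree on their $i$-th coordinate, giving a common value $u_i \in V(G)$. Set $R_e(\bar v_e) = \mathbf{0}$ on inconsistent assignments, and otherwise
$$R_e(\bar v_e) \;=\; \bigotimes_{\{i,j\} \in E_e} w(u_i,u_j).$$
A consistent assignment to $e$ is determined by the choice of $u_i$ for $i \in S_e$, so each $R_e$ has at most $n^{|S_e|} \leq n^d$ nonzero tuples; summing over the constant number of hyperedges yields $|I| = O(n^d)$, and the construction runs in time polynomial in $|I|$.

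For correctness I use the connectivity of each $\psi(i)$ in $\mH$: if a global valuation $v$ has $\bigotimes_e R_e(v(\bx_e)) \neq \mathbf{0}$, then local consistency propagates through the connected subhypergraph $\psi(i)$, forcing a single value $u_i \in V(G)$ at coordinate $i$ of every $x \in \psi(i)$. Hence nonzero valuations are in bijection with tuples $(u_1,\ldots,u_k) \in V(G)^k$, and because every clique edge $\{i,j\}$ is assigned to exactly one witness hyperedge, the product $\bigotimes_{e} R_e(v(\bx_e))$ telescopes to $\bigotimes_{\{i,j\} \in E(C_k)} w(u_i,u_j)$. Summing over all valuations therefore reproduces the $k$-clique value of $G$ over $\sigma$ (treating non-edges and self-loops as carrying weight $\mathbf{0}$). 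The main conceptual obstacle is engineering the $R_e$ so that this telescoping is exact with no double-counting or omission of pairs; the witness-edge assignment $\{i,j\} \mapsto e(i,j)$ combined with the per-index consistency mechanism, which uses the full force of the embedding definition (both connectivity and touching), is precisely what achieves this in a semiring-oblivious way.
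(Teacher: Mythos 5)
Your proposal is correct and follows essentially the same route as the paper's proof: fix an optimal embedding, assign each clique edge of $C_k$ a witness hyperedge, use tuple-valued domains indexed by $\psi^{-1}(x)$, bound the support of each $R_e$ by $n^{|S_e|} \leq n^{\mathsf{wed}(C_k \mapsto \mH)}$, and argue correctness via connectivity-forced consistency plus the partition of clique edges among hyperedges. The only cosmetic difference is that the paper first makes $G$ $k$-partite and restricts each $R_e$'s support to local cliques, whereas you work with ordered tuples and kill non-cliques by assigning weight $\mathbf{0}$ to non-edges and self-loops --- equivalent devices yielding the same instance size and the same semiring value (up to the same ordered-versus-unordered counting convention the paper's $k$-partite step already adopts).
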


\begin{proof}
We will show a reduction from the $k$-clique problem with $n$ vertices over a semiring $\sigma$ to  $\mathsf{SumProd}\langle \sigma, \mH \rangle$. Without loss of generality, we will assume that the input graph $G$ to the $k$-clique problem is $k$-partite, with partitions $V_1, \dots, V_k$. Indeed, given any graph $G = (V,E)$ where $V = \{v_1, v_2, \dots, v_n\}$, consider the $k$-partite graph $G^k = (V^k,E^k)$ where $V^k = \{v_i^j \mid 1 \leq i \leq n, 1 \leq j \leq k\}$ and for any two vertices $v_i^j, v_p^q \in V^k$, $\{v_i^j, v_p^q\} \in E^k $ iff $\{v_i, v_p\} \in E$ and $j \neq q$. Then there is a one-to-one mapping from a $k$-clique in $G$ to a $k$-clique in $G^k$.

Let $\psi$ be an embedding from $C_k$ to $\mH$ that achieves a weak edge depth $\lambda = k / \mathsf{emb}_k(\mH) $. As we mentioned before, it is convenient to take $V(C_k) = \{1, \dots, k\}$. We now construct the input instance $I$ for $\mathsf{SumProd}\langle \sigma, \mH \rangle$.  More explicitly, the task is to construct the function $R_e$ for each hyperedge $e \in E$. 

To this end, we first assign to each pair $\{ i,j \}: i \neq j, i,j, \in \{1,2,\dots, k\}$ a hyperedge $\theta(\{ i,j \}) = e \in E(\mH)$ satisfying the following conditions: $\psi(i) \cap e \neq \emptyset$ and $\psi(j) \cap e \neq \emptyset$. Such an $e$ must exist by the definition of an embedding. If there is more than one hyperedge satisfying the condition, we arbitrarily choose one.

For every variable $x \in V(\mH)$, let $\psi^{-1}(x)$ be the subset of $\{1, \dots, k\}$ mapping to $x$.  Then, we define the domain $\mathsf{Dom}(x_i)$ of each variable $x_i$ in the input instance as vectors over ${[n]}^{|\psi^{-1}(x_i)|}$. Let $S_e = \{i \in [k] \mid  \psi(i) \cap e \neq \emptyset \}$. Note that $|S_e| = d_{\psi}(e) \leq \lambda$. Also, note that $\psi^{-1}(x) \subseteq S_e$ for all $ x \in e$. Then, we compute all cliques in the graph $G$ between the partitions $V_i, i \in S_e$; these cliques will be of size  $|S_e|$  and can be computed in running time $O(n^\lambda)$ by brute force.  

For every clique $\{a_i \in V_i \mid i \in S_e \}$, let $t$ be the tuple over $\prod_{i \in S_e} \mathsf{Dom}(x_i)$ such that its value at position $x$ is  $\langle a_i \mid i \in \psi^{-1}(x) \rangle$. Then, we set
$$ R_e(t) = \mathbf{1} \otimes \bigotimes_{\{ i,j \}: \theta(\{ i,j \}) = e} w(\{i,j\}).$$
In other words, we set the value as the semiring product of all the weights between the edges $\{a_i, a_j\}$ in the clique whenever the pair $\{ i,j \}$ is assigned to the hyperedge $e$. All the other tuples are mapped to $\mathbf{0}$. By construction, the size of the input is $|I| = O(n^\lambda)$.



\smallskip

We now show that the two problems will return exactly the same output. To show this, we first show that there is a bijection between $k$-cliques in $G$ and the solutions of the SumProduct instance.

$\Leftarrow$ Take a clique $\{ a_1, \dots, a_k\}$ in $G$. We map the clique to the valuation $v(x) = \langle a_i \mid i \in \psi^{-1}(x) \rangle$. This valuation is a solution to the SumProduct problem, since any subset of  $\{ a_1, \dots, a_k\}$ forms a sub-clique. Hence for any hyperedge $e$, $R_e(v(\bx_e)) \neq \mathbf{0}$. 

\medskip

$\Rightarrow$ Take a valuation $v$. Consider any $i \in \{1, \dots, k\}$ and consider any two variables $x,y \in \psi(i)$ (recall that $\psi(i)$ must be nonempty). Recall that $x,y \in V(\mH)$. We claim that the $i$-th index in the valuation $v(x), v(y)$ must take the same value, which we will denote as $a_i$; this follows from the connectivity condition of the embedding. Indeed, since $x,y\in \psi(i)$, there exists a sequence of hyperedges $e_1, e_2, \dots, e_m$ where $m \geq 1$ such that $e_j \cap e_{j+1} \neq \emptyset$ for $1 \leq j \leq m-1$ and $x\in e_1$, $y \in e_m$.  By the construction, the $i$-th index in $v(x)$ will then ``propagate'' to that in $v(y)$. This proves the claim. It then suffices to show that $\{ a_1, \dots, a_k\}$ is a clique in $G$. Indeed take any $i,j \in \{1, \dots, k\}$. Since $i$ and $j$ are adjacent as two vertices in $C_k$, we know $\psi(i)$ and $\psi(j)$ touch. Therefore, there exists a hyperedge $e$ that contains some $x \in \psi(i)$ and $y \in \psi(j)$. But this means that $\{a_i, a_j\}$ must form an edge in $G$.

\medskip

We next show that the semiring product of the weights in the clique has the same value as the semiring product of the corresponding solution. Indeed:
$$ \bigotimes_{e \in E} R_e(v(\bx_e)) = \mathbf{1} \otimes \bigotimes_{e \in E} \bigotimes_{\{ i,j \}: \theta(\{ i,j \}) = e} w(\{i,j\}) =  \bigotimes_{\{ i,j \}: i \neq j} w(\{i,j\})$$
where the last equality holds because each edge of the $k$-clique is assigned to exactly one hyperedge of $\mH$.

The above claim together with the bijection show that the output will be the same; indeed, each the semiring sums will go over exactly the same elements with the same values.
\medskip

To conclude the proof, suppose that $\mathsf{SumProd}\langle \sigma, \mH \rangle$ could be answered in time $O(|I|^c)$ for some $c \geq 1$. This means that we can solve the $k$-clique problem over $\sigma$ in time $O(n^\lambda + n^{c \lambda}) = O(n^{c \cdot \mathsf{wed}(C_k \mapsto \mH)})$.
\end{proof}

As an immediate consequence of Theorem~\ref{thm:reduction}, we can show the following lower bound.

\begin{proposition}
Under the Min-Weight $k$-Clique conjecture, $\mathsf{SumProd}\langle \mathsf{trop}, \mH \rangle$ (and thus $\mathsf{SumProd}\langle \mH \rangle$) cannot be computed in time $O(|I|^{\mathsf{emb}_k(\mH)-\epsilon})$ for any constant $\epsilon >0$.
\end{proposition}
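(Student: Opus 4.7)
The plan is to derive this bound by direct contrapositive application of Theorem~\ref{thm:reduction} with the tropical semiring $\sigma = \mathsf{trop}$. Suppose, toward a contradiction, that there exists a constant $\epsilon > 0$ and an algorithm that computes $\mathsf{SumProd}\langle \mathsf{trop}, \mH \rangle$ in time $O(|I|^{c})$ where $c = \mathsf{emb}_k(\mH) - \epsilon$. Plugging this $c$ into the conclusion of Theorem~\ref{thm:reduction} yields an algorithm that solves the $k$-clique problem over $\mathsf{trop}$ in time $O(n^{c \cdot \mathsf{wed}(C_k \mapsto \mH)})$.

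Next, I would simplify the exponent using the definition $\mathsf{emb}_k(\mH) = k / \mathsf{wed}(C_k \mapsto \mH)$. Writing $\lambda := \mathsf{wed}(C_k \mapsto \mH)$, we have
$$c \cdot \lambda \;=\; \bigl(\mathsf{emb}_k(\mH) - \epsilon\bigr)\cdot \lambda \;=\; k - \epsilon \lambda.$$
Since $\lambda \geq 1$ (every embedding has weak edge depth at least $1$ as long as $k\geq 1$, and in fact $\lambda \geq k/\mathsf{subw}(\mH) > 0$ by Theorem~\ref{prop:lowerbound}), the quantity $\epsilon' := \epsilon \lambda$ is a strictly positive constant. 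Hence the reduction yields an algorithm for min-weight $k$-clique running in time $O(n^{k - \epsilon'})$, directly contradicting the Min-Weight $k$-Clique conjecture. The ``and thus'' parenthetical is immediate: any algorithm for $\mathsf{SumProd}\langle \mH \rangle$ (which treats the semiring as an oracle) automatically yields an algorithm for $\mathsf{SumProd}\langle \mathsf{trop}, \mH \rangle$ of the same runtime, since $\mathsf{trop}$'s operations are constant-time.

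There is really no substantive obstacle here; the statement is a packaging of Theorem~\ref{thm:reduction} together with the definition of $\mathsf{emb}_k$ and the Min-Weight Clique Conjecture. The only point that merits a line of justification is confirming that $\lambda > 0$ so that the loss $\epsilon \lambda$ is genuinely positive and uniform in $n$; this is immediate from the definitions. I would present the argument as a three-line calculation followed by the invocation of the conjecture.
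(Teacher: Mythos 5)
Your proposal is correct and follows exactly the paper's own argument: a direct contrapositive application of Theorem~\ref{thm:reduction} with $\sigma = \mathsf{trop}$, simplifying the exponent $(\mathsf{emb}_k(\mH)-\epsilon)\cdot\mathsf{wed}(C_k \mapsto \mH) = k - \epsilon\,\mathsf{wed}(C_k \mapsto \mH)$ to contradict the Min-Weight $k$-Clique conjecture. Your explicit check that $\lambda \geq 1$ and your remark on the ``and thus'' parenthetical are small additions the paper leaves implicit, but the route is the same.
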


\begin{proof}
Indeed, if  $\mathsf{SumProd}\langle \mathsf{trop}, \mH \rangle$ can be computed in time $O(|I|^{\mathsf{emb}_k(\mH)-\epsilon})$ for some constant $\epsilon >0$, then by Theorem~\ref{thm:reduction} the $k$-Clique problem over the tropical semiring can be solved in time $O(n^{(\mathsf{emb}_k(\mH)-\epsilon) \cdot \mathsf{wed}(C_k \mapsto \mH)}) = O(n^{k-\delta})$ for some $\delta>0$. However, this violates  the Min-Weight $k$-Clique conjecture.
\end{proof}

Similarly, we can show the following:

\begin{proposition}
Under the Boolean $k$-Clique conjecture, $\mathsf{SumProd}\langle \sigma_{\mathbb{B}}, \mH \rangle$ (and thus $\mathsf{SumProd}\langle \mH \rangle$) cannot be computed via a combinatorial algorithm in time $O(|I|^{\mathsf{emb}_k(\mH)-\epsilon})$ for any constant $\epsilon >0$.
\end{proposition}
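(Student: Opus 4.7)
The plan is to mirror the proof of the preceding Min-Weight proposition, but instantiated with the Boolean semiring $\sigma_{\mathbb{B}}$ and with care taken that the reduction preserves the ``combinatorial'' character of the algorithm. Suppose for contradiction that there is a combinatorial algorithm $\mathcal{A}$ computing $\mathsf{SumProd}\langle \sigma_{\mathbb{B}}, \mH\rangle$ in time $O(|I|^{\emb_k(\mH)-\epsilon})$ for some constant $\epsilon>0$. The goal is to build, from $\mathcal{A}$, a combinatorial algorithm solving the Boolean $k$-Clique problem in time $O(n^{k-\delta})$ for some $\delta>0$, contradicting the Boolean $k$-Clique conjecture.

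First I apply Theorem~\ref{thm:reduction} with $\sigma = \sigma_{\mathbb{B}}$. Given an $n$-vertex instance of Boolean $k$-Clique, the reduction produces an input instance $I$ of size $O(n^\lambda)$ with $\lambda = \mathsf{wed}(C_k \mapsto \mH) = k/\emb_k(\mH)$. Running $\mathcal{A}$ on $I$ then takes time
\[
O(|I|^{\emb_k(\mH)-\epsilon}) \;=\; O\!\left(n^{\lambda(\emb_k(\mH)-\epsilon)}\right) \;=\; O\!\left(n^{k - \epsilon \lambda}\right),
\]
and the instance-building phase itself runs in time $O(n^\lambda)$, which is dominated by the above since $\emb_k(\mH) \geq 1$. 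Setting $\delta := \epsilon \lambda > 0$ yields a combinatorial $O(n^{k-\delta})$ algorithm for Boolean $k$-Clique, contradicting the conjecture.

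The one subtlety, and the only step that warrants attention beyond quoting Theorem~\ref{thm:reduction}, is verifying that combinatoriality is preserved. The reduction in Theorem~\ref{thm:reduction} only performs: (i) enumeration of sub-cliques of size $|S_e| \leq \lambda$ inside the partitions of a $k$-partite graph by brute-force search, (ii) table lookups, and (iii) semiring operations, which over $\sigma_{\mathbb{B}}$ are just logical AND/OR on bits. None of these steps uses fast matrix multiplication or any non-combinatorial primitive, so if $\mathcal{A}$ is combinatorial then the composed algorithm is combinatorial as well. I do not expect any genuine obstacle here; the main thing is to state the combinatoriality check explicitly, since that is the one point where this proof differs nontrivially from the tropical-semiring analogue.
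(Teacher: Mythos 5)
Your proof is correct and follows essentially the same route as the paper, which proves this proposition ``similarly'' to its tropical-semiring analogue by composing Theorem~\ref{thm:reduction} with the assumed fast algorithm and doing the same exponent arithmetic. Your explicit check that the reduction of Theorem~\ref{thm:reduction} uses only brute-force enumeration, lookups, and semiring operations (hence preserves combinatoriality) is a worthwhile detail the paper leaves implicit, but it does not constitute a different approach.
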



%

The above two results imply that to obtain the best lower bound, we need to find the clique size with the largest $\mathsf{emb}_k(\mH)$. However, the function $k \mapsto \mathsf{emb}_k(\mH)$ is really intriguing. It is not clear whether in the definition supremum is ever needed, i.e., whether there exists a hypergraph where the embedding power is achieved in the limit. 

In \autoref{sec: decidability}, we show that for every hypergraph $\mH$, there exists a natural number $k$ such that $\mathsf{emb}(\mH) = \mathsf{emb}_k(\mH)$. We also demonstrate how to compute $\mathsf{emb}(\mH)$ through a MILP and locate the complexity of computing the embedding power within the class 2-$\mathsf{EXPTIME}$ (double exponential time). The insight of our method is that, instead of computing the ``integral'' embedding power, one can consider the ``fractional'' embedding power and then recover the ``integral'' one by letting the clique size $k$ to be sufficiently large.

\section{Decidability of the Clique Embedding Power}\label{sec: decidability}

To illustrate the algorithm for computing $\emb(\mH)$, it is instructive to first show how to compute $\emb_k(\mH)$ for a fixed clique size $k$.

\subsection{An Integer Linear Program for $\mathsf{wed}(C_k \mapsto \mH)$}

The following ILP formulation computes the minimum weak edge depth $w = \mathsf{wed}(C_k \mapsto \mH)$. 
%
\begin{equation}\label{MIP k}
\begin{array}{llllllll}
\displaystyle \min & \multicolumn{3}{l}w \\
\textrm{s.t.}
& & &\displaystyle \sum\limits_{S \subseteq V} x_S &= &k \\
& & &x_S &= &0 &\forall S \subseteq V &\textrm{ where } \textrm{$S$ is not connected}\\
& & & \min \{ x_{S}, x_T\} &=& 0 &  \forall S,T \subseteq V &\textrm{ where $S,T$ do not touch}  \\
\displaystyle & & & \sum\limits_{S \subseteq V: e \cap S \neq \emptyset } x_S  & \leq & w &\forall e \in \mathcal{E} \\
& & &x_S &\in &\mathbb{Z}_{\geq 0} &\forall S \subseteq V 
\end{array}
\end{equation}

Each integer variable $x_S$, $S \subseteq V$, indicates how many vertices in $C_k$ are assigned to the subset $S$. For example, if $x_{\{1,2\}} = 3$, this means in the embedding $\psi$, three vertices are mapped to the subset $\{1,2\} \subseteq V$. It is sufficient to record only the number of vertices in $C_k$ because of the symmetry of the clique. That is, since any two vertices are connected in $C_k$, one can arbitrarily permute the vertices in $C_k$ so that the resulting map $\psi'$ is still an embedding (given $\psi$ is).
Moreover, since the clique size $k$ is fixed, to compute $\mathsf{emb}_k(\mH)$ it suffices to minimize $w$. 


Observe that the condition $ \min \{ x_{S}, x_T\} = 0 $ is not a linear condition. To encode it as such, we perform a standard transformation. We introduce a binary variable $y_S$ for every set $S \subseteq V$. Then, we can write it as
%
\begin{equation*}
\left\{
\begin{array}{lllll}
	& x_S &+ &k\cdot y_S &\leq k  \\
	& x_T &+ &k\cdot y_T &\leq k \\
	& y_S &+ &y_T &\geq 1 \\
\end{array} 
\right.
\end{equation*}
%
Indeed, since $y_S$ and $y_T$ are binary variables, at least one of them is $1$. Without loss of generality assume $y_S$ = 1. Then $x_S = 0$ since $x_S \in \mathbb{Z}_{\geq 0}$. Therefore two subsets that do not touch cannot both be chosen in the embedding.



\subsection{A Mixed Integer Linear Program for $\emb(\mH)$}

The above ILP construction does not directly yield a way to compute the clique embedding power, since the latter is defined to be the supremum for all $k$. 
\begin{figure}[h!]
\centering
\captionsetup{justification=centering}
\includegraphics[scale=0.4]{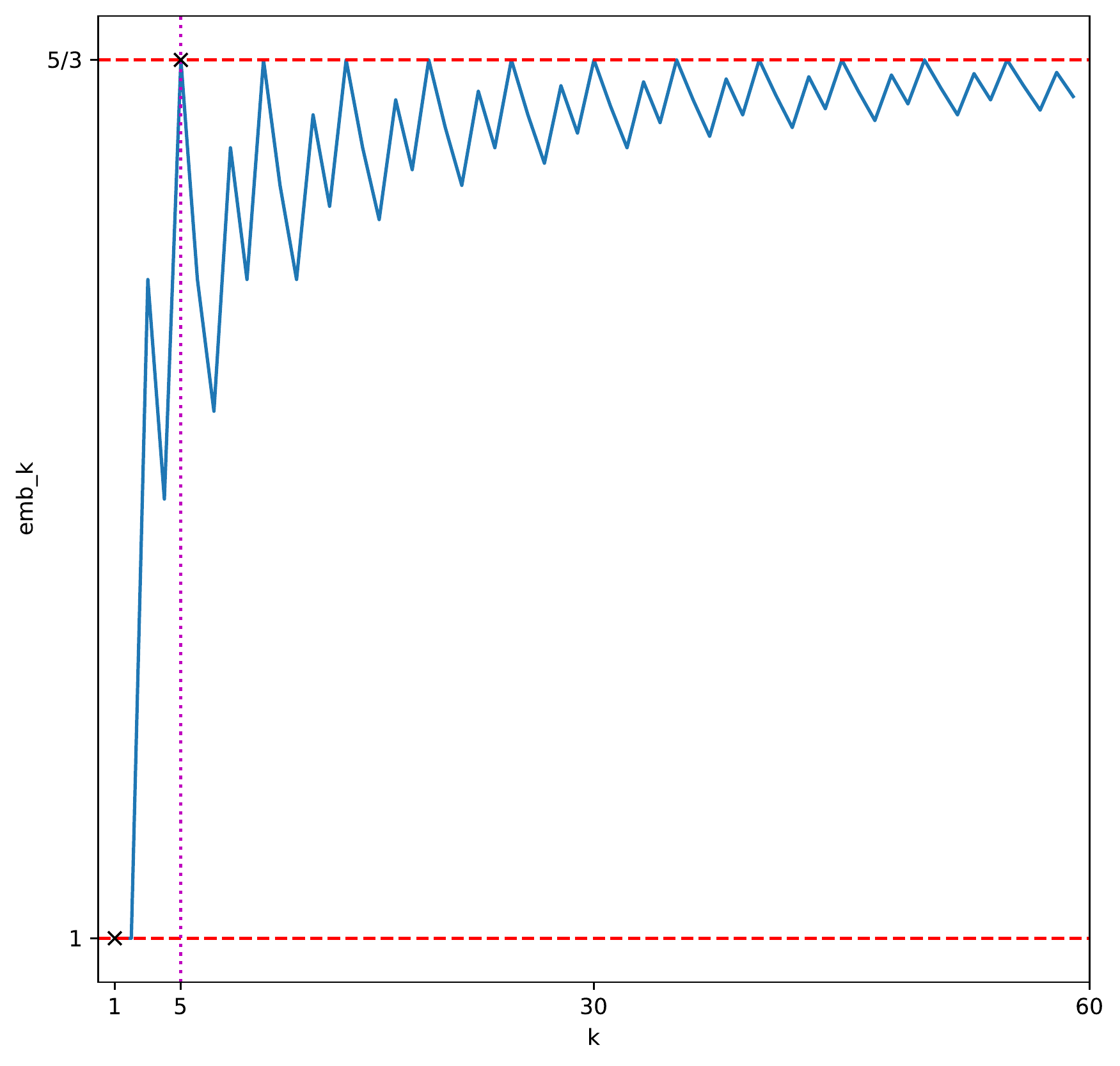}
\caption{$\mathsf{emb}_k(\mH)$ for the 6-cycle}
\label{fig:emb}
\end{figure}

As alluded before, the behavior of $\mathsf{emb}_k(\mH)$ as a function of $k$ is non-trivial (and certainly not monotone). Figure~\ref{fig:emb} depicts how the clique embedding power changes with respect to different clique sizes for the 6-cycle, where the horizontal line represents the clique size.

To compute the supremum, the key idea is to change the integer variables $x_S$ to be continuous (so they behave as fractions) and "normalize" the clique size $k$ to 1. Specifically, we can write the following mixed integer linear program (MILP).
\begin{equation}\label{MIP}
\begin{array}{llllllll}
\displaystyle \min & \multicolumn{3}{l}w \\
\textrm{s.t.}
& & &\displaystyle \sum\limits_{S \subseteq V} x_S &= &1 \\
& & &x_S &= &0 &\forall S \subseteq V &\textrm{ where } \textrm{$S$ is not connected}\\
& & & \min \{ x_{S}, x_T\} &=& 0 &  \forall S,T \subseteq V &\textrm{ where $S,T$ do not touch}  \\
\displaystyle & & & \sum\limits_{S \subseteq V: e \cap S \neq \emptyset } x_S  & \leq & w &\forall e \in \mathcal{E} \\
& & &x_S &\in &\mathbb{R}_{\geq 0} &\forall S \subseteq V 
\end{array}
\end{equation}


\begin{proposition} \label{prop:milp}
Let $w^*$ be the optimal solution of MILP~\eqref{MIP}. Then, $\emb(\mH) = 1/w^*$. Additionally, there exists an integer $K \geq 3$ such that $\emb(\mH) = \emb_K(\mH)$.
\end{proposition}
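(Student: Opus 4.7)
The plan is to prove both $\emb(\mH) \leq 1/w^*$ and $\emb(\mH) \geq 1/w^*$, and to extract the witness integer $K$ from the second direction. The core idea is a correspondence, up to integer scaling, between clique embeddings into $\mH$ and feasible points of MILP~\eqref{MIP}: an embedding of $C_k$ corresponds to an integer-valued solution of the natural ILP~\eqref{MIP k}, and the normalization $\sum_S x_S = 1$ in~\eqref{MIP} is just the result of dividing through by the clique size $k$ and permitting fractional allocations.

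For the upper bound $\emb(\mH) \leq 1/w^*$, I would start from any embedding $\psi : C_k \mapsto \mH$, viewed as a map from $[k]$ to subsets $S \subseteq V(\mH)$, and set $x_S = |\psi^{-1}(S)|/k$. Then $\sum_S x_S = 1$; the connectivity requirement on $\psi$ forces $x_S = 0$ whenever $S$ is not connected; and the touching requirement forces at most one of $x_S, x_T$ to be positive when $S, T$ do not touch, since two vertices of $C_k$ assigned to them would have to be adjacent. Setting the binary linearization variables $y$ accordingly makes the point feasible. For each hyperedge $e$, $\sum_{S : S \cap e \neq \emptyset} x_S = d_{\psi}(e)/k \leq \mathsf{wed}(\psi)/k$, so $w^* \leq \mathsf{wed}(C_k \mapsto \mH)/k = 1/\emb_k(\mH)$; taking the supremum over $k \geq 3$ yields the inequality.

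For the reverse inequality and the existence of $K$, I would first note that $w^*$ is rational: for each of the finitely many settings of the binary variables $y$, the residual LP over the $x_S$ has rational data and attains its optimum at a rational vertex, so $w^*$ is itself rational and is attained by a rational $x^*$. Let $D$ be a common denominator of $\{x^*_S\}_S$ and set $n_S = D x^*_S \in \mathbb{Z}_{\geq 0}$, so $\sum_S n_S = D$. I would then build an embedding $\psi : C_D \mapsto \mH$ by assigning, for each $S$ with $n_S > 0$, exactly $n_S$ distinct vertices of $C_D$ to $S$. The MILP constraints translate directly into embedding conditions: $x^*_S > 0$ only for connected $S$, and the $\min$-constraint ensures pairwise touching of the image subsets. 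For each hyperedge $e$, $d_\psi(e) = \sum_{S : S \cap e \neq \emptyset} n_S \leq D w^*$, so $\emb_D(\mH) \geq 1/w^*$. If $D \geq 3$ set $K = D$; otherwise, by Proposition~\ref{prop:emb:properties}(3), $\emb_{mD}(\mH) \geq \emb_D(\mH)$ for any positive integer $m$, so we can take $K = mD$ with $m$ large enough that $K \geq 3$. Combined with the preceding paragraph, this gives $\emb(\mH) \geq \emb_K(\mH) \geq 1/w^* \geq \emb(\mH)$, so all inequalities are equalities, and $K$ is identified simultaneously.

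The only delicate step is the rationality of $w^*$, which rests on standard LP theory applied to each of the finitely many $y$-subproblems; everything else is routine bookkeeping that checks the two directions of the scaling dictionary preserve the connectivity and touching conditions. The boundary case $D < 3$ is handled cleanly by the multiplicativity property of Proposition~\ref{prop:emb:properties}(3), which lets us replace any small denominator by a large enough multiple without decreasing the embedding power.
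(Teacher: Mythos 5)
Your proof is correct and follows essentially the same route as the paper: scale an embedding of $C_k$ down by $k$ to get a feasible point of MILP~\eqref{MIP} (giving $\emb_k(\mH)\le 1/w^*$ for all $k$), then use rationality of the optimum to clear denominators and read off an embedding of $C_K$ witnessing $\emb_K(\mH)\ge 1/w^*$. Your treatment is in fact slightly more careful than the paper's, spelling out why $w^*$ is rational via the finitely many settings of the binary variables and handling the boundary case $K<3$ via Proposition~\ref{prop:emb:properties}(3), both of which the paper leaves implicit.
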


\begin{proof}
We first show for any $k$, $\mathsf{emb}_k(\mH) \geq 1/w^*$. Indeed, any embedding $\psi : C_k \rightarrow \mH$ determines the values of the variables $x_S$ in MILP~\eqref{MIP k}. Let $\hat{x_S} = \frac{x_S}{k}$ and $\hat{w} = \frac{w}{k}$ be an assignment of the variables in MILP~\eqref{MIP}. It is easy to see that this is a feasible assignment. Thus, $w^* \leq \mathsf{wed}(C_k \mapsto \mH)/k$. Therefore $\mathsf{emb}_k(\mH) = k/\mathsf{wed}(C_k \mapsto \mH) \leq 1/w^*$.

Next, observe that $\mathsf{emb}(\mH)$ is a rational number. In fact, the solution $w^*$ for MILP~\eqref{MIP} is a rational number, since every constant is a rational number~\cite{Schrijver99}. Let $K$ be the least common multiplier of their denominators of the fractions in the set $\{x_S\}$. Then, the assignment $K \cdot x_S, K \cdot w$  is a feasible solution for MILP~\eqref{MIP k} for $k =K$. This implies that $K \cdot w^* \geq \mathsf{wed}(C_K \mapsto \mH)$, so $\emb_K(\mH) \geq 1/w^*$. 

Thus, we have shown that $1/w^*$ is an upper bound for $\{\emb_k(\mH)\}_k$, but also $\emb_K(\mH) = 1/w^*$. Hence, $\emb(\mH) = \emb_K(\mH) = 1/w^*$.
\end{proof}

This leads to the following theorem (whose proof can be found in \cite{arxiv}).

\begin{theorem}\label{thm: fractional emb}
The problem of computing ${\emb}(\mH)$ for a hypergraph $\mH$ is in 2-$\mathsf{EXPTIME}$ and, in particular, is decidable.
\end{theorem}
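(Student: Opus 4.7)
The plan is to invoke Proposition~\ref{prop:milp}, which identifies $\emb(\mH)$ with $1/w^\ast$, where $w^\ast$ is the optimum of MILP~\eqref{MIP}. Hence computing $\emb(\mH)$ reduces to solving this MILP exactly and outputting the reciprocal of its optimum. Decidability and the 2-$\mathsf{EXPTIME}$ upper bound will both follow from a size analysis of the MILP together with a standard worst-case bound for MILP in that size.

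The first step is an encoding-size analysis. MILP~\eqref{MIP} contains one continuous variable $x_S$ for every $S \subseteq V(\mH)$, the scalar $w$, and an auxiliary binary variable $y_S$ per subset (used in the linearization of $\min\{x_S,x_T\}=0$ via the constraints $x_S+y_S\leq 1$, $x_T+y_T\leq 1$, $y_S+y_T\geq 1$, after rescaling $k$ to $1$). This yields $O(2^{|V(\mH)|})$ continuous and $O(2^{|V(\mH)|})$ binary variables, and $O(2^{2|V(\mH)|})$ constraints, with all coefficients of bit-length $O(|V(\mH)|)$. The total bit-length of the program is therefore $L = 2^{O(|V(\mH)|)}$.

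The second step is to actually solve the MILP. Since every integer variable is binary, I would enumerate all $2^{2^{|V(\mH)|}}$ assignments to the vector $(y_S)_{S \subseteq V(\mH)}$, and for each assignment solve the induced (purely continuous) linear program with the ellipsoid or an interior-point method under rational arithmetic. Each LP has size $L$ and can be solved in time $\mathrm{poly}(L) = 2^{O(|V(\mH)|)}$, producing a rational vertex optimum whose numerator and denominator have bit-length polynomial in $L$; taking the minimum over all branches recovers $w^\ast$ exactly, from which $1/w^\ast = \emb(\mH)$ is a rational number that can be written down. The overall running time is
\begin{equation*}
2^{2^{|V(\mH)|}} \cdot 2^{O(|V(\mH)|)} \;=\; 2^{2^{O(|V(\mH)|)}},
\end{equation*}
which places the problem in 2-$\mathsf{EXPTIME}$ and, a fortiori, shows decidability.

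The main (and rather minor) obstacle is the rationality and bit-length accounting: one has to verify that the optima of the LP subproblems are rational with polynomially bounded encodings, so that $w^\ast$ can be extracted exactly and inverted, and that the brute-force enumeration over the $y_S$ does not interact badly with the connectivity/non-touching constraints (both of which are hardcoded into the coefficients, not into the branching). All of this is standard once the size analysis above is in place; sharper bounds (e.g., via Lenstra's MILP algorithm for fixed integer dimension) would not improve the result because here the number of integer variables is itself exponential in $|V(\mH)|$, so branching is already asymptotically optimal within the claimed 2-$\mathsf{EXPTIME}$ budget.
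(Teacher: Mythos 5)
Your proposal is correct and follows essentially the same route as the paper: reduce to MILP~\eqref{MIP} via Proposition~\ref{prop:milp}, enumerate the doubly exponentially many assignments of the binary variables $y_S$, and solve each residual LP (of size exponential in $|V(\mH)|$) in time polynomial in its encoding length, giving a $2^{2^{O(|V(\mH)|)}}$ bound. Your additional care about the rationality and bit-length of the LP optima is a welcome elaboration of a point the paper leaves implicit, but it does not change the argument.
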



Unfortunately, our method does not yield an upper bound on how large the $K$ in Proposition~\ref{prop:milp} might be. There is no reason that $K$ cannot be very large, e.g.\ doubly exponential to the size of $\mH$. Some knowledge of that could be very useful in computing the clique embedding power. For example, one can compute all the embeddings from $C_k$, for $k$ not greater than the upper bound, and output the one with the largest embedding power. 
The best-known upper bound we have so far is the following. The proof can be found in~\cite{arxiv}.

\begin{proposition} \label{prop:Kbound}
For any hypergraph $\mH$, there is a constant $K= O((2^{|V|})!)$ such that $\mathsf{emb}(\mH) = \mathsf{emb}_K(\mH)$. 
\end{proposition}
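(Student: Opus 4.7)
The plan is to apply Proposition~\ref{prop:milp}, which tells us that $\emb(\mH) = \emb_K(\mH)$ for any common multiple $K$ of the denominators of an optimal fractional solution $\{x_S^*\}$ of MILP~\eqref{MIP}. The task therefore reduces to bounding this least common multiple purely in terms of $|V(\mH)|$.

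First I would put MILP~\eqref{MIP} in explicit linearized form. The normalization $\sum_S x_S = 1$ forces $x_S \le 1$ for every $S$, so each constraint $\min\{x_S, x_T\} = 0$ (for non-touching $S,T$) is replaced by the triple $x_S + y_S \le 1$, $x_T + y_T \le 1$, $y_S + y_T \ge 1$ with binary auxiliaries $y_S, y_T \in \{0,1\}$. The resulting MILP has at most $N := 2^{|V(\mH)|}$ continuous variables $x_S$ (one per connected subset), one extra continuous variable $w$, and at most $N$ binary auxiliary variables; every linear-constraint coefficient lies in $\{-1,0,1\}$ and every right-hand side entry lies in $\{0,1\}$.

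Next, fix the binary variables at any optimal assignment $\{y_S^*\}$. After eliminating those $x_S$'s forced to $0$ and absorbing the bounds $x_S \le 1$ (which are implied by $\sum_S x_S = 1$ and $x_S \ge 0$), what remains is a pure LP whose equality constraints, after adding a slack $s_e \ge 0$ per edge inequality, are $\sum_S x_S = 1$ together with $\sum_{S: e \cap S \neq \emptyset} x_S + s_e - w = 0$ for each $e \in E(\mH)$. This yields at most $|E(\mH)| + 1 \le N + 1$ equality constraints with coefficients in $\{-1,0,1\}$ and right-hand sides in $\{0,1\}$. The LP attains its optimum $w^*$ at a basic feasible vertex, and Cramer's rule applied to the invertible basis submatrix $B$ of size $d \le N+1$ gives $x_S^* = \det(B_S)/\det(B)$ with integer $B_S$; hence the LCM of the denominators divides $|\det(B)|$.

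Finally, Hadamard's inequality yields $|\det(B)| \le d^{d/2} \le (N+1)^{(N+1)/2}$, and Stirling's bound $m! \ge (m/e)^m$ implies $(m+1)^{(m+1)/2} = o(m!)$, so $K \le (N+1)^{(N+1)/2} = O\bigl((2^{|V|})!\bigr)$, as claimed. The main obstacle I anticipate is careful bookkeeping: verifying that after the linearization and the elimination of forced-zero variables, the LP constraint matrix really has $\{-1,0,1\}$ entries and basis dimension at most $N + 1$, and that the vertex optimum of this restricted LP still achieves the original MILP optimum. Once these are in place, the Hadamard-plus-Stirling step is routine.
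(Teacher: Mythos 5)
Your proposal is correct and follows essentially the same route as the paper: invoke Proposition~\ref{prop:milp} and bound the least common denominator of an optimal $\{x_S\}$ via Cramer's rule applied to the (linearized, binary-fixed) constraint matrix with entries in $\{-1,0,1\}$ of size $O(2^{|V|})$. Your bookkeeping of the slacks and basic feasible solution is a more careful rendering of the paper's terse argument, and your use of Hadamard's inequality even yields the sharper intermediate bound $(N+1)^{(N+1)/2}$ before relaxing to $O((2^{|V|})!)$, whereas the paper implicitly uses the cruder permutation-expansion bound $|\det| \le m!$.
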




\section{Examples of Tightness}\label{sec:tight}

In this section, we identify several classes of queries where the clique embedding power coincides with the submodular width. For brevity, we write $\emb$, $\subw$, $\fhw$, and $\adw$ when the underlying hypergraph is clear under context. Table~\ref{table:summary_tightness} summarizes our results.
\begin{table}[t]
\centering
\begin{tabular}{|l||l r|l r|}
\hline
              & $\emb$       &                    & $\subw$     &                     \\ \hline
Acyclic       & $1$   &  [\autoref{tight:acyclic}]                          & $1$  &  \cite{Yannakakis81}                         \\ \hline
Chordal       & $=$   &  [\autoref{thm:chordal}]                          & $=$   & \cite{NgoPRR18}                           \\ \hline
$\ell$-cycle  & $2 - 1 / \lceil \ell / 2 \rceil$ &[\autoref{tight:cycle}] & $2 - 1 / \lceil \ell / 2 \rceil$ & \cite{DBLP:journals/algorithmica/AlonYZ97} \\ \hline
$K_{2, \ell}$ & $2 - 1/ \ell$ & [\autoref{tight:K2l}]                   & $2 - 1/ \ell$ & \cite{PANDA}                    \\ \hline
$K_{3, 3}$    & $2$  &  [\autoref{tight:K33}]                            & $2$    & \cite{PANDA}                            \\ \hline
$A_{\ell}$    & $(\ell - 1)/2$ & [\autoref{tight:Al}]                  & $(\ell - 1)/2$ & \cite{NgoPRR18}                  \\ \hline
$\mH_{\ell, k}$ & $\ell / k $ & [\autoref{tight:Hlk}]                    & $\ell / k $      & \cite{NgoPRR18}                \\ \hline
$Q_b$         & \textcolor{red}{$17/9$}    &                       & \textcolor{red}{$2$} & \cite{JoglekarR18}                                   \\ \hline
$Q_{hb}$      & \textcolor{red}{$7/4$}   &                         & \textcolor{red}{$2$} &  [\autoref{prop:subwboat}]               \\ \hline
\end{tabular}
\captionsetup{justification=centering}
\caption{Summary of $\emb$ and $\subw$ for some classes of queries}
\label{table:summary_tightness}
\end{table}
\subsection{Cycles}
 For the cycle query of length $\ell \geq 3$, we show that $\emb = \subw = 2 - 1 /\lceil \ell / 2\rceil$. The best-known algorithm for $\ell$-cycle detection (and counting) of Alon, Yuster, and Zwick~\cite{DBLP:journals/algorithmica/AlonYZ97} runs in time $O(|I|^{\subw})$. First, we show the following lemma.

\begin{lemma}\label{lem:cycle}
Consider the cycle query of length $\ell \geq 3$. Then $ \mathsf{emb} \geq  2 - 1/{\lceil \frac{\ell}{2} \rceil}$.
\end{lemma}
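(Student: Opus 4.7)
The plan is to exhibit an explicit embedding $\psi : C_k \to \mH$ with $k = 2m - 1$ and weak edge depth exactly $m$, where $m = \lceil \ell / 2 \rceil$; this immediately gives $\mathsf{emb}(\mH) \geq \mathsf{emb}_{2m-1}(\mH) = (2m-1)/m = 2 - 1/m$. I handle odd and even $\ell$ separately.

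For odd $\ell = 2m - 1$, set $\psi(i) := \{v_i, v_{i+1}, \ldots, v_{i+m-2}\}$ for each $i \in [\ell]$, with indices taken modulo $\ell$. Each $\psi(i)$ is an arc of length $m - 1$, hence connected. Since the cyclic distance between any two vertices of $C_\ell$ is at most $\lfloor \ell / 2 \rfloor = m - 1$, the arcs $\psi(i), \psi(j)$ for any two distinct indices either overlap (if the cyclic distance is at most $m - 2$) or have endpoints linked by a single cycle edge (if the distance is exactly $m - 1$), so they touch. A direct count shows that each cycle edge $\{v_s, v_{s+1}\}$ is met by exactly the $m$ arcs starting in $\{s - m + 2, s - m + 3, \ldots, s + 1\}$, giving $\mathsf{wed}(\psi) = m$.

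For even $\ell = 2m$, I derive $\psi$ from the odd-case construction by edge subdivision. Let $\tilde{\mH} := C_{2m-1}$ carry the embedding $\tilde{\psi}$ of the odd case, pick a vertex $v_1$ of $\tilde{\mH}$, and subdivide by replacing $v_1$ with two adjacent new vertices $v_1^a, v_1^b$; the resulting graph is $C_{2m} = \mH$. Define $\psi(i) := \tilde{\psi}(i)$ when $v_1 \notin \tilde{\psi}(i)$, and $\psi(i) := (\tilde{\psi}(i) \setminus \{v_1\}) \cup \{v_1^a, v_1^b\}$ otherwise. The verifications are: connectivity, which follows from the new edge $\{v_1^a, v_1^b\}$; preservation of pairwise touching, since any cycle edge of $\tilde{\mH}$ incident to $v_1$ becomes a cycle edge of $\mH$ incident to $v_1^a$ or $v_1^b$ (both of which lie in every arc that formerly contained $v_1$); and the depth bound, where edges not incident to $\{v_1^a, v_1^b\}$ retain their previous depth and the new edge has depth $|\{j : v_1 \in \tilde{\psi}(j)\}| = m - 1$.

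I expect the main obstacle to lie in the even case, specifically in simultaneously ensuring that every originally touching pair still touches after subdivision and that no edge of $\mH$ has depth exceeding $m$. The former requires the ``both copies'' rule for arcs containing $v_1$ (otherwise a pair bridged only through an edge incident to $v_1$ would fall apart once the bridging vertex is split), and the latter relies on only $m - 1$ of the odd-case arcs containing $v_1$, which leaves exactly enough slack on the new edge $\{v_1^a, v_1^b\}$ so that its depth does not exceed $m$.
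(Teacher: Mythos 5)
Your construction is correct and is essentially the paper's own: the odd case uses the same consecutive-arc embedding of an $\ell$-clique (you describe the forward map $\psi$, the paper the reverse map $\psi^{-1}$), and your even-case ``subdivide $v_1$ into $v_1^a,v_1^b$ and put both copies in every arc that contained $v_1$'' is exactly the paper's device of setting $\psi^{-1}(x_\ell)=\psi^{-1}(x_{\ell-1})$ on top of the $(\ell-1)$-cycle embedding. The only nitpick is that your depth check in the even case should also explicitly cover the two edges incident to exactly one of $v_1^a,v_1^b$ (they inherit depth $m$ from the corresponding edges at $v_1$), but this is immediate.
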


\begin{proof}
We start with the case where $\ell$ is odd and name the variables of the cycle query as $x_1, \dots, x_{\ell}$. Then, we  define $\lambda = (\ell+1) / 2$ and an embedding from a $\ell$-clique as follows:
\begin{equation}
    \begin{aligned}
 \psi^{-1}(x_1) & = \{ 1,2, \dots, \lambda-1\} \\ 
  \psi^{-1}(x_2) & = \{ 2, 3, \dots, \lambda\} \\ 
  & \dots \\
\psi^{-1}(x_\ell) & = \{ 2\lambda-1, 1, \dots, \lambda-2\} 
    \end{aligned}
\end{equation}
In other words, $\psi$ maps each $i \in [\ell]$ into a consecutive segment consisting of $\lambda - 1$ vertices in the cycle. To see why $\psi$ is an embedding, we observe that for any $i, j \in [\ell]$ such that $\psi(i) \cap \psi(j) = \emptyset$, $|\psi(i) \cup \psi(j)| = 2\lambda -2 = \ell - 1$, so there is an edge that intersects both $\psi(i)$ and $\psi(j)$. It is easy to see that $\mathsf{wed}(\psi) = \lambda = (\ell + 1) / 2 $. Thus, $\mathsf{emb} \geq \ell/\lambda = 2\ell/(\ell+1)= 2-2/(\ell+1)$. 

\medskip

If $\ell$ is even, we define $\lambda = \ell/ 2$ and an embedding from a $(\ell - 1)$-clique as follows:
\begin{align*}
 \psi^{-1}(x_1) & = \{ 1,2, \dots, \lambda-1\} \\ 
  \psi^{-1}(x_2) & = \{ 2, 3, \dots, \lambda\} \\ 
  & \dots \\
   \psi^{-1}(x_{\ell - 1}) & = \{ 2\lambda-2, 2\lambda-1, 1, \dots, \lambda-3\}  \\
\psi^{-1}(x_\ell) & = \psi^{-1}(x_{\ell - 1})
\end{align*}
where $\psi^{-1}(x_i), i \in [\ell - 1]$ is exactly the embedding we constructed for $(\ell - 1)$-cycle. We show that this is a valid embedding. Let $i, j \in [\ell]$ such that $\psi(i) \cap \psi(j) = \emptyset$. 
\begin{enumerate}
    \item If $i \in \psi^{-1}(x_{\ell - 1})$ (or $j \in \psi^{-1}(x_{\ell - 1})$), then $|\psi(i) \cup \psi(j)| = \ell - 1$, so there is an edge that intersects both $\psi(i)$ and $\psi(j)$.
    \item If $i, j \notin \psi^{-1}(x_{\ell - 1})$, then $\psi(i)$ and $\psi(j)$ do not contain $x_{\ell - 1}$ and $x_{\ell}$. There is an edge that intersects both $\psi(i)$ and $\psi(j)$ since $|\psi(i) \cup \psi(j)| = \ell - 2$.
\end{enumerate}
For this embedding, we have $\mathsf{wed}(\psi) = \lambda = \ell / 2 $, so $\mathsf{emb} \geq (\ell-1)/\lambda =  2-2/\ell $. 
\end{proof}

Thus, we have the following proposition.
\begin{proposition} \label{tight:cycle}
    Consider the cycle query of length $\ell \geq 3$. Then we have 
    $$ \emb = \subw = 2 - \frac{1}{\lceil \frac{\ell}{2} \rceil}
    $$
\end{proposition}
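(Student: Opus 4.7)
The plan is to sandwich both quantities between the same matching lower and upper bounds, which immediately forces equality. Let $\beta = 2 - 1/\lceil \ell/2 \rceil$ denote the target value.

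First, I would invoke Lemma~\ref{lem:cycle}, which has already been proved in the excerpt and gives the lower bound $\emb(\mH) \geq \beta$ by exhibiting, for each parity of $\ell$, an explicit embedding of a clique into the $\ell$-cycle whose weak edge depth realizes this ratio. Next, I would apply Theorem~\ref{prop:lowerbound} (specialized through $\emb(\mH) \leq \subw(\mH)$) to conclude $\subw(\mH) \geq \emb(\mH) \geq \beta$.

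For the matching upper bound $\subw(\mH) \leq \beta$, I would appeal to the classical result that the $\ell$-cycle admits tree decompositions of $\rho^*$-width exactly $2-1/\lceil \ell/2 \rceil$ after dividing the cycle into a chain of paths of length $\lceil \ell/2 \rceil$ each, yielding $\fhw(\mH) \leq \beta$, hence $\subw(\mH) \leq \fhw(\mH) \leq \beta$. Equivalently, one can cite directly the Alon--Yuster--Zwick analysis~\cite{DBLP:journals/algorithmica/AlonYZ97} which implicitly establishes the submodular width of cycles as $\beta$ (and matches the $\PANDA$ upper bound). Chaining the three inequalities
\[ \beta \leq \emb(\mH) \leq \subw(\mH) \leq \beta \]
forces equality throughout, proving the proposition.

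The main obstacle is the upper bound $\subw(\mH) \leq \beta$: Lemma~\ref{lem:cycle} and Theorem~\ref{prop:lowerbound} handle the rest in one line each, but showing $\subw(\mH) \leq \beta$ requires exhibiting, for every edge-dominated monotone submodular $F$, a tree decomposition of $F$-width at most $\beta$. Since the excerpt does not redevelop this standard argument, I would simply cite it from~\cite{DBLP:journals/algorithmica/AlonYZ97,PANDA} as the known tight submodular-width bound for cycles.
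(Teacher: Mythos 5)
Your overall sandwich structure ($\beta \leq \emb \leq \subw \leq \beta$) is exactly the paper's: the lower bound comes from Lemma~\ref{lem:cycle}, the inequality $\emb \leq \subw$ from Theorem~\ref{prop:lowerbound}, and the upper bound on $\subw$ is cited (the paper uses Example~7.4 of~\cite{PANDA} with $m=1$, adapted to odd length). Your fallback of ``simply cite the known tight submodular-width bound for cycles from~\cite{PANDA}'' is therefore fine and matches the paper.

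However, your \emph{primary} route to the upper bound is wrong. You claim the $\ell$-cycle admits a tree decomposition of $\rho^*$-width $\beta = 2 - 1/\lceil \ell/2\rceil$, giving $\fhw \leq \beta$ and hence $\subw \leq \fhw \leq \beta$. In fact $\fhw(C_\ell) = 2$ for every $\ell \geq 4$: any tree decomposition of a cycle must contain a bag with (at least) three vertices including two non-adjacent ones, and such a bag already has fractional edge cover number $2$ (the two non-adjacent vertices form an independent set of size $2$, so LP duality forces $\rho^* \geq 2$). So no single tree decomposition certifies the bound $\beta < 2$; this is precisely why cycles are the canonical example separating $\subw$ from $\fhw$. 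The genuine content of the upper bound is that the tree decomposition must be chosen \emph{adaptively} as a function of the edge-dominated submodular $F$: one does a case split on whether $F$ assigns large or small value to certain singletons (a heavy/light argument, as in~\cite{PANDA} Example~7.4 and in the spirit of Alon--Yuster--Zwick), and picks a different decomposition in each case. If you drop the $\fhw$ detour and keep only the citation to the submodular-width analysis, your proof coincides with the paper's.
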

\begin{proof}
   It can be shown using Example 7.4 in~\cite{PANDA} (setting $m = 1$) that $\subw \leq 2 - 1/\lceil \frac{\ell}{2}\rceil$ (technically the Example 7.4 in~\cite{PANDA} only deals with cycles of even length, but their argument can be easily adapted to cycles of odd length). We thus conclude by applying \autoref{lem:cycle} and \autoref{prop:lowerbound}.
\end{proof}



\subsection{Complete Bipartite Graphs} \label{sec:bipartite}
 We consider a complete bipartite graph $K_{m,n}$ where the two partitions of its vertices are $\{x_1, \dots, x_m\}$ and $\{y_1, \dots, y_n\}$. We study two of its special cases, $K_{2, \ell}$ and $K_{3, 3}$. The proofs of the following two propositions can be found in~\cite{arxiv}. 
 
\begin{proposition}\label{tight:K2l}
    For the bipartite graph $K_{2,\ell}$, $\mathsf{emb}(K_{2,\ell}) = \subw(K_{2,\ell}) =  2-1/\ell$.
\end{proposition}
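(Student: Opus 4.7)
The plan is to sandwich both $\emb(K_{2,\ell})$ and $\subw(K_{2,\ell})$ at the target value $2 - 1/\ell$. Theorem~\ref{prop:lowerbound} already supplies $\emb \leq \subw$, and the upper bound $\subw(K_{2,\ell}) \leq 2 - 1/\ell$ follows from the submodular-width analysis of complete bipartite queries in \cite{PANDA}. So the remaining work is to exhibit an explicit embedding of a clique into $K_{2,\ell}$ witnessing $\emb(K_{2,\ell}) \geq 2 - 1/\ell$.

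I would target the clique $C_{2\ell-1}$ and aim for weak edge depth $\ell$. Label its vertices as $\{1, \ldots, 2\ell - 1\}$ and set
\begin{align*}
\psi(i) &= \{x_1, y_i\} \quad \text{for } i = 1, \ldots, \ell - 1, \\
\psi(i) &= \{x_2, y_{i-\ell+1}\} \quad \text{for } i = \ell, \ldots, 2\ell - 2, \\
\psi(2\ell - 1) &= \{y_\ell\}.
\end{align*}

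Each $\psi(i)$ is either a single vertex or an edge of $K_{2,\ell}$, so connectivity is immediate. The touching condition splits into cases: images inside the first (resp.\ second) block share $x_1$ (resp.\ $x_2$); $\psi(i)$ and $\psi(\ell + i - 1)$ share $y_i$; a cross-block pair $\psi(i)$ and $\psi(\ell + j - 1)$ with $i \neq j$ touches via the hyperedge $\{x_1, y_j\} \in E(K_{2,\ell})$; and the tail vertex $2\ell - 1$ reaches every other image using $\{x_1, y_\ell\}$ or $\{x_2, y_\ell\}$.

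Finally I would compute $|\psi^{-1}(x_a) \cup \psi^{-1}(y_j)|$ for each of the $2\ell$ hyperedges of $K_{2,\ell}$; a short case analysis (edges $\{x_1, y_j\}$ for $j < \ell$, the edge $\{x_1, y_\ell\}$, and the mirror pair for $x_2$) shows every hyperedge is hit by exactly $\ell$ clique vertices, yielding $\mathsf{wed}(C_{2\ell-1} \mapsto K_{2,\ell}) \leq \ell$ and hence $\emb_{2\ell-1}(K_{2,\ell}) \geq (2\ell-1)/\ell = 2 - 1/\ell$. The delicate part is choosing the embedding in the first place: two $x$-only images would fail to touch across the bipartition, but thickening each cluster vertex with a $y$-partner threatens to overload the star-shaped hyperedges $\{x_a, y_j\}$; the asymmetric tail mapping $2\ell - 1 \mapsto \{y_\ell\}$ is exactly what absorbs the ``odd man out'' so that all $2\ell$ hyperedges end up balanced at load $\ell$.
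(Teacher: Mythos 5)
Your proposal is correct and matches the paper's proof essentially verbatim: the embedding of $C_{2\ell-1}$ you construct is exactly the one in the paper (stated there via the inverse map $\psi^{-1}(x_1)=\{1,\dots,\ell-1\}$, $\psi^{-1}(x_2)=\{\ell,\dots,2\ell-2\}$, $\psi^{-1}(y_i)=\{i,\ell+i-1\}$ for $i<\ell$, $\psi^{-1}(y_\ell)=\{2\ell-1\}$), and your touching and weak-edge-depth computations agree with it. The only divergence is that the paper additionally gives a self-contained proof of $\subw(K_{2,\ell})\le 2-1/\ell$ (a case analysis on an edge-dominated submodular function over the two proper tree decompositions, with threshold $\theta=1-1/\ell$), whereas you cite \cite{PANDA}; since the paper itself attributes that bound to \cite{PANDA} in its summary table, this is an acceptable shortcut rather than a gap.
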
 

 \begin{proposition} \label{tight:K33}
     For $K_{3, 3}$, we have $\mathsf{emb}(K_{3,3}) = \subw(K_{3, 3}) =  2$.
 \end{proposition}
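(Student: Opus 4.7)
The plan is to prove both inequalities of $\emb(K_{3,3}) = \subw(K_{3,3}) = 2$. The upper bound $\emb(K_{3,3}) \leq \subw(K_{3,3})$ is immediate from the corollary following \autoref{prop:lowerbound}, and $\subw(K_{3,3}) = 2$ is known from~\cite{PANDA}. Thus, the nontrivial task is the matching lower bound $\emb(K_{3,3}) \geq 2$, which I will establish by exhibiting an explicit embedding $\psi : C_k \to K_{3,3}$ with $\mathsf{wed}(\psi) = k/2$ for a suitable $k$.

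Before giving the construction, I highlight what I expect to be the main obstacle. A simple averaging argument shows that no embedding using only pair images $\{x_i, y_j\}$ can reach weak edge depth $k/2$: each such pair intersects exactly $5$ of the $9$ edges of $K_{3,3}$, so the average depth across edges is $5k/9 > k/2$. This forces the use of singleton images (which intersect only $3$ edges apiece) to pull the average down. But two distinct singletons $\{x_i\}, \{x_j\}$ from the same side of the bipartition do not touch in $K_{3,3}$, so at most one $X$-singleton and one $Y$-singleton can appear among the $\psi(i)$'s. Balancing these restricted singletons with suitably chosen pair images is the crux.

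Guided by this, I would take $k = 8$ and label the bipartition of $K_{3,3}$ as $\{x_1, x_2, x_3\} \cup \{y_1, y_2, y_3\}$, and define
\begin{align*}
\psi(1) = \psi(2) &= \{x_1\}, & \psi(3) = \psi(4) &= \{y_1\}, \\
\psi(5) &= \{x_2, y_2\}, & \psi(6) &= \{x_2, y_3\}, \\
\psi(7) &= \{x_3, y_2\}, & \psi(8) &= \{x_3, y_3\}.
\end{align*}
Each image is trivially connected (either a singleton or an edge of $K_{3,3}$). For the touching condition, $\{x_1\}$ and $\{y_1\}$ touch via the edge $\{x_1, y_1\}$; the singleton $\{x_1\}$ touches each corner pair $\{x_i, y_j\}$ (with $i, j \in \{2,3\}$) through the edge $\{x_1, y_j\}$, and symmetrically for $\{y_1\}$; and any two of the four corner pairs either share a vertex or are diagonally placed and touch via one of the other two corner edges.

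Finally, I would compute $\mathsf{wed}(\psi)$ by inspecting the nine edges of $K_{3,3}$: each of the five edges incident to $x_1$ or $y_1$ is touched by exactly $4$ images (two singletons plus two corner pairs), while each of the four interior edges $\{x_i, y_j\}$ with $i, j \in \{2,3\}$ is touched by exactly $3$ corner pairs. Thus $\mathsf{wed}(\psi) = 4 = k/2$, giving $\emb_8(K_{3,3}) \geq 2$ and completing the lower bound.
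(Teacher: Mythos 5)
Your proof is correct and takes essentially the same route as the paper: both establish $\emb(K_{3,3}) \ge 2$ by exhibiting an explicit embedding of $C_8$ with weak edge depth $4$ (the paper uses the six pairs $\{x_i,y_j\}$ with $i\in\{1,2\}$ plus two copies of the singleton $\{x_3\}$; your mix of two repeated singletons and four corner pairs is an equally valid variant, and your depth counts check out, modulo the harmless slip that the edge $\{x_1,y_1\}$ is hit by four singleton images rather than ``two singletons plus two corner pairs''). The only substantive difference is that the paper also gives a self-contained submodularity argument for $\subw(K_{3,3})\le 2$ via the two proper tree decompositions, whereas you cite this bound from~\cite{PANDA}, consistent with the paper's own attribution.
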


Finding $\mathsf{emb}(K_{m,n})$ and $\subw(K_{m, n})$ in the most general case is still an open question. 
 

\subsection{Chordal Queries}

In this section, we identify a special class of queries, called \textit{choral queries}. We introduce necessary definitions and lemmas to prove that for a chordal query, $\emb$, $\subw$, $\fhw$, and $\adw$ all coincide, as stated in \autoref{thm:chordal}.

Let $G$ be a graph. A chord of a cycle $C$ of $G$ is an edge  that connects two non-adjacent nodes  in $C$. We say that $G$ is {\em chordal} if any cycle in $G$ of length greater than 3 has a chord. We can extend chordality to hypergraphs by considering the clique-graph of a hypergraph $\mH$, where edges are added between all pairs of vertices contained in the same hyperedge. 

Let $(\htree, \chi)$ be a tree decomposition of a hypergraph $\mH$ and $\bags(\htree) \defeq \{\chi(t) \mid t \in V(\htree)\}$. We say that $(\htree, \chi)$ is \textit{proper} if there is no tree decomposition $(\htree', \chi')$ such that 
\begin{enumerate}
    \item for every bag $b_1 \in \bags(\htree')$, there is a bag $b_2 \in \bags(\htree)$ such that $b_1 \subseteq b_2$; 
    \item $\bags(\htree') \nsupseteq \bags(\htree)$,
\end{enumerate}

The following important properties hold for chordal graphs.

\begin{lemma}[\cite{CarmeliKKK21}]  \label{lem:maxclq}
If $G$ is a chordal graph and $(\htree, \chi)$ is a proper tree decomposition of $G$, then the bags of $(\htree, \chi)$, i.e. $\bags(\htree)$, are the maximal cliques in $G$.
\end{lemma}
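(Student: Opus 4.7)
The plan is a short chain of implications that combines chordality (to produce a target clique-tree decomposition) with properness (to force the given decomposition to coincide with it). First, I would invoke the classical theorem of Gavril that every chordal graph $G$ admits a \emph{clique tree}: a tree decomposition $(\htree_G, \chi_G)$ whose bags are precisely the maximal cliques of $G$. This is a well-known result proved via a perfect elimination ordering of $G$, so I would just cite it.

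Next, I would use the standard fact that in any tree decomposition of $G$, every clique of $G$ is contained in some bag -- a consequence of the Helly property for subtrees of a tree, in the same spirit as \autoref{lem:decomp}. Applied to $(\htree, \chi)$, this says every maximal clique of $G$ --- i.e.\ every element of $\bags(\htree_G)$ --- is contained in some bag of $\htree$. Thus $(\htree_G, \chi_G)$, viewed as a candidate refinement of $(\htree, \chi)$, satisfies condition~(1) of the definition of proper.

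Because $(\htree, \chi)$ is proper, $(\htree_G, \chi_G)$ cannot also satisfy condition~(2), and so $\bags(\htree_G) \supseteq \bags(\htree)$: every bag of $(\htree, \chi)$ is already a maximal clique of $G$. For the reverse inclusion, every maximal clique $K$ is contained in some bag $B$ of $(\htree, \chi)$ by the Helly fact; but $B$ is now known to be a maximal clique of $G$, and $K \subseteq B$ combined with the maximality of $K$ forces $K = B$. Hence $\bags(\htree) = \bags(\htree_G) = \{\text{maximal cliques of } G\}$, as required.

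The main obstacle is effectively just the first step -- Gavril's theorem -- which I would simply cite, since it is classical. The remaining argument is a syntactic manipulation against the definition of proper and requires no further structural use of chordality beyond the existence of a clique tree; in particular, no splicing or contraction of tree decompositions is needed once one is willing to test properness directly against the clique tree itself.
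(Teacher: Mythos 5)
Your argument is correct. The paper does not prove this lemma itself --- it imports it from~\cite{CarmeliKKK21}, whose machinery (as the appendix of this paper indicates) goes through a bijection between the bags of proper tree decompositions and minimal triangulations of $G$. Your route is different and more elementary: you use Gavril's clique-tree theorem to exhibit one specific tree decomposition $(\htree_G,\chi_G)$ whose bags are exactly the maximal cliques, then play it off against the definition of properness. The logic checks out on both sides: the Helly argument shows every maximal clique sits inside some bag of $(\htree,\chi)$, so $(\htree_G,\chi_G)$ satisfies condition~(1) of the properness definition; properness then forces condition~(2) to fail, giving $\bags(\htree_G)\supseteq\bags(\htree)$, i.e.\ every bag of $(\htree,\chi)$ is a maximal clique; and the reverse inclusion follows because a maximal clique contained in a bag that is itself a clique must equal that bag. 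What the triangulation approach buys (and yours does not) is the finer correspondence used elsewhere in the paper --- e.g.\ in the proofs of \autoref{lem:properTD} and \autoref{lem:4vertices}, where one needs to enumerate \emph{all} proper decompositions via minimal triangulations of a non-chordal graph. What your approach buys is brevity: for the chordal case alone, it avoids triangulation theory entirely, resting only on Gavril's classical theorem and the Helly property already invoked in \autoref{lem:decomp}.
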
 

For chordal hypergraphs, we can show the following lemma:

\begin{lemma}  \label{lem:tds}
Let $\mH$ be a hypergraph. Then, $(\htree, \chi)$ is a (proper) tree decomposition of $\mH$ if and only if it is a (proper) tree decomposition of the clique-graph of $\mH$.
\end{lemma}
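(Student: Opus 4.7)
The plan is to first establish the equivalence for the (non-proper) notion of tree decomposition, and then observe that the definition of "proper" depends only on the set of tree decompositions, so the proper version is immediate.

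For the forward direction, suppose $(\htree,\chi)$ is a tree decomposition of $\mH$. I would verify the two tree decomposition conditions for the clique-graph $G$ of $\mH$. The connectivity condition is unchanged since $V(G)=V(\mH)$. For the coverage condition, take any edge $\{u,v\}\in E(G)$. By definition of the clique-graph, there is a hyperedge $e\in E(\mH)$ with $\{u,v\}\subseteq e$. Since $(\htree,\chi)$ is a tree decomposition of $\mH$, some bag $\chi(t)$ contains $e$, and hence contains $\{u,v\}$.

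For the backward direction, suppose $(\htree,\chi)$ is a tree decomposition of $G$, and take any hyperedge $e\in E(\mH)$. By the construction of $G$, every pair of vertices in $e$ is adjacent in $G$, so $e$ forms a clique in $G$. I will invoke the classical fact (a direct consequence of the Helly property of subtrees of a tree, also used in the proof of \autoref{lem:decomp}) that every clique of $G$ is contained in some bag of any tree decomposition of $G$: for each $v\in e$ the set $\htree_v=\{t\in V(\htree)\mid v\in\chi(t)\}$ is a subtree of $\htree$, and any two of these subtrees intersect (because the corresponding pair is an edge of $G$, which is covered by some bag); Helly then gives a common node $t^\ast$ with $e\subseteq\chi(t^\ast)$. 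The connectivity condition is again unchanged. Hence $(\htree,\chi)$ is a tree decomposition of $\mH$.

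Combining both directions shows that the set of tree decompositions of $\mH$ and the set of tree decompositions of the clique-graph of $\mH$ are literally the same set. The "proper" condition in the statement is a condition formulated entirely in terms of the collection of tree decompositions (one compares $(\htree,\chi)$ to other tree decompositions $(\htree',\chi')$ via containment of their bag sets). Since these two collections coincide, $(\htree,\chi)$ is proper as a tree decomposition of $\mH$ if and only if it is proper as a tree decomposition of the clique-graph. The main (and really only nontrivial) step is the clique-in-a-bag argument via Helly; everything else is unwinding the definitions.
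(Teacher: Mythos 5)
Your proof is correct and follows essentially the same route as the paper's: the forward direction by observing each clique-graph edge lies inside a hyperedge and hence inside a bag, and the backward direction by applying the Helly property of subtrees to place each hyperedge in a common bag. Your explicit remark that properness is determined solely by the (now identical) collection of tree decompositions is a cleaner justification of the step the paper dismisses as "easy to extend."
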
 

\begin{proof}[Proof]
We first show that $(\htree, \chi)$ is a tree decomposition of $\mH$ if and only if it is also a tree decomposition of the clique-graph of $\mH$. The forward direction is straightforward. For the backward direction, let $(\htree, \chi)$ be a decomposition of the clique-graph of $\mH$. Then for any hyperedge $e \in \mH$ and any pair of vertices $u, v \in e$, we know that $\{t \mid u \in \chi(t)\} \cap \{t \mid v \in \chi(t)\} \neq \emptyset$. By the \textit{Helly Property}, there is a bag that contains all vertices in the hyperedge $e$. Therefore, $(\htree, \chi)$ is a tree decomposition for $\mH$. It is easy to extend the proof for proper tree decompositions.
\end{proof}

The following corollary is immediate from both \autoref{lem:maxclq} and \autoref{lem:tds}:
\begin{corollary} \label{lem:maxclq-hyper}
If $\mH$ is a chordal hypergraph and $(\htree, \chi)$ is a proper tree decomposition of $\mH$, then the bags of $(\htree, \chi)$ are the maximal cliques in the clique-graph of $\mH$.
\end{corollary}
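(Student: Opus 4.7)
The plan is simply to chain the two preceding lemmas together, since the corollary is a straightforward composition. First, I would observe that by definition, saying $\mH$ is a chordal hypergraph means that its clique-graph $G_{\mH}$ (obtained by turning every hyperedge into a clique) is a chordal graph in the usual graph-theoretic sense. This is the definition stated in the paragraph introducing chordality, so no work is needed here beyond unpacking the definition.

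Next, given a proper tree decomposition $(\htree, \chi)$ of $\mH$, I would apply \autoref{lem:tds} to transfer it: by that lemma, $(\htree, \chi)$ is also a proper tree decomposition of $G_{\mH}$. Since $G_{\mH}$ is chordal, \autoref{lem:maxclq} then applies and yields that $\bags(\htree)$ is exactly the set of maximal cliques of $G_{\mH}$, which is precisely the claim.

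I do not anticipate a real obstacle; the only point to double-check is that \autoref{lem:tds} preserves \emph{properness} and not merely the property of being a tree decomposition, but that lemma is stated with the parenthetical ``(proper)'' on both sides of the equivalence, so the transfer is legitimate. Hence a two-line proof suffices, and the statement is genuinely immediate from the two cited lemmas, as already indicated in the paper's text preceding the corollary.
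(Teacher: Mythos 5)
Your proof is correct and matches the paper's intended argument exactly: the paper states the corollary as immediate from \autoref{lem:maxclq} and \autoref{lem:tds}, and your chaining of the two (with the correct observation that \autoref{lem:tds} transfers properness as well) is precisely that composition.
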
 

The above corollary tells us that every proper tree decomposition has the same set of bags, with the only difference being the way the bags are connected in the tree. From this, we can easily obtain that $\subw = \fhw$. However, we have an even stronger result:

\begin{theorem}\label{thm:chordal}
If $\mH$ is a chordal hypergprah, then $\emb = \adw = \subw = \fhw$.
\end{theorem}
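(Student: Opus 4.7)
The plan is to prove all four equalities by establishing $\fhw \leq \adw$ and $\fhw \leq \emb$; combined with the general inequalities $\adw \leq \subw \leq \fhw$ and $\emb \leq \subw$ (the latter from \autoref{prop:lowerbound}), this sandwiches every quantity between $\fhw$ and $\fhw$. The structural fact driving both arguments is that for a chordal hypergraph,
$$
\fhw(\mH) \;=\; \max_{C} \rho^*(C),
$$
where $C$ ranges over the maximal cliques of the clique-graph of $\mH$. The upper bound follows from \autoref{lem:maxclq-hyper}: any proper tree decomposition has its bag set equal to the set of maximal cliques, and one can pass from an arbitrary tree decomposition to a proper one without increasing $\rho^*$-width by iteratively replacing bags with subsets. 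The matching lower bound is a Helly argument on the subtrees $\htree_v = \{t : v \in \chi(t)\}$: any two vertices of a maximal clique co-occur in some hyperedge and hence in a common bag, so the $\htree_v$ pairwise intersect and by Helly some bag contains the entire clique.

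Given this structural fact, the identity $\adw = \fhw$ follows from LP duality. For every fractional independent set $\mu$ of $\mH$, the structural fact implies $\min_{(\htree,\chi)} \max_t \mu(\chi(t)) = \max_C \mu(C)$. By LP duality applied to the fractional-edge-cover LP restricted to $C$, $\sup_\mu \mu(C) = \rho^*(C)$, since any optimal dual solution is supported on $C$ and, extended by $0$ outside $C$, remains a valid fractional independent set of $\mH$. Swapping the outer $\sup_\mu$ and $\max_C$ yields $\adw(\mH) = \max_C \rho^*(C) = \fhw(\mH)$.

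For $\emb \geq \fhw$, pick a maximal clique $C$ attaining $\rho^*(C) = \fhw(\mH)$ together with a rational optimizer $\mu^*$ supported on $C$ satisfying $\mu^*(C) = \rho^*(C)$. Choose an integer $k$ clearing the denominators of $\mu^*$ and set $k_v := k\mu^*(v) \in \mathbb{Z}_{\geq 0}$ for $v \in C$, so that $K := \sum_{v \in C} k_v = k \cdot \fhw(\mH)$. Define $\psi : V(C_K) \to 2^{V(\mH)}$ by partitioning $V(C_K)$ into blocks of sizes $(k_v)_{v \in C}$ and mapping the $v$-th block to the singleton $\{v\}$. Each image is trivially connected, and any two singletons $\{u\}, \{v\}$ touch because $u, v \in C$ are adjacent in the clique-graph (so either $u = v$ or some hyperedge contains both). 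For every $e \in E(\mH)$,
$$
d_\psi(e) \;=\; \sum_{v \in C \cap e} k_v \;=\; k \sum_{v \in C \cap e} \mu^*(v) \;\leq\; k,
$$
so $\mathsf{wed}(\psi) \leq k$ and $\emb(\mH) \geq \emb_K(\mH) \geq K/k = \fhw(\mH)$.

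The main obstacle will be the structural identity $\fhw(\mH) = \max_C \rho^*(C)$ itself, which requires carefully combining \autoref{lem:maxclq-hyper} with the Helly property and the reduction from arbitrary tree decompositions to proper ones. Once that scaffolding is in place, the LP-duality step for $\adw$ and the singleton embedding for $\emb$ are both short and self-contained.
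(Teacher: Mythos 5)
Your proposal is correct and follows essentially the same route as the paper: locate a maximal clique of the clique-graph attaining $\fhw$ via \autoref{lem:maxclq-hyper}, use the optimal fractional vertex packing (the LP dual of the fractional edge cover) on that clique both as an edge-dominated fractional independent set to get $\adw \geq \fhw$ and, after clearing denominators, as multiplicities for a singleton-image clique embedding to get $\emb \geq \fhw$, then sandwich with $\emb \leq \subw \leq \fhw$ and $\adw \leq \subw \leq \fhw$. Your write-up merely makes explicit the LP-duality and Helly steps that the paper invokes tersely.
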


\begin{proof}
Since $\mH$ is chordal, by \autoref{lem:maxclq-hyper}, the bags of any proper tree decomposition $(\htree, \chi)$ of $\mH$ are the maximal cliques in the clique-graph of $\mH$. Then, there is a node $t \in V(\htree)$ such that the minimum fractional edge cover (also the maximum fractional vertex packing) of $\chi(t)$ is $\fhw$. In particular, let $\{u_i \mid i \in \chi(t)\}$ be the optimal weights assigned to each vertex in $\chi(t)$ that obtain the maximum fractional vertex packing, so $\sum_{i \in \chi(t)} u_i = \fhw$. We let $\hat{u}_i = u_i / \sum_{i \in \chi(t)} u_i$ and $k$ be the smallest integer such that $k \cdot \hat{u}_i$ is an integer for every $i \in \chi(t)$. Now we construct an embedding $\psi$ from $C_k$ to $\mH$ so that 
every $\psi(j)$, for $j \in [k]$ is a singleton and for each $i \in \chi(t)$, let $d_{\psi}^{-1}(i) \defeq k \cdot \hat{u}_i$. This assignment uses up all $k$ vertices in $C_k$, since $\sum_{i \in \chi(t)} k \cdot \hat{u}_i = k$. Then,
$$ \mathsf{wed}(\psi) = \max_{e \in E(\mH)} \sum_{i \in e} k \cdot \hat{u}_i = \frac{k}{\sum_{i \in \chi(t)} u_i} \cdot \max_{e \in E(\mH)} \sum_{i \in e} u_i \leq \frac{k}{\sum_{i \in \chi(t)} u_i}
$$
and thus, we get $\emb = \fhw$ since
$$ \emb \geq \emb(C_k \mapsto \mH) \geq \frac{k}{\mathsf{wed}(\psi)} \geq \sum_{i \in \chi(t)} u_i = \fhw.
$$

For $\adw$, we define the following modular function over subsets of $V(\mH)$: for any $S \subseteq V(\mH)$, let $\mu(S) \defeq \sum_{i \in S} u_i$. It is edge-dominated since for every hyperedge $e$, $\mu(e) = \sum_{i \in e} u_i \leq 1$. Moreover, we have that $\mu(\chi(t)) = \sum_{i \in \chi(t)} u_i = \fhw$. That is, $\adw \geq \fhw$, so $\adw = \fhw$. As a remark, it is also viable to use Lemma 3.1 in~\cite{PANDA} to prove the claim for adaptive width.
\end{proof}


Recall that Corollary~\ref{lem:maxclq-hyper} implies if $\mH$ is chordal, then every proper tree decomposition of $\mH$ has the same set of bags. We show the converse is also true, which could be of independent interest. The proof is in~\cite{arxiv}.

\begin{lemma}\label{lem:properTD}
Let $\mH$ be a hypergraph. If every proper tree decomposition of $\mH$ has the same set of bags, then $\mH$ is chordal. 
\end{lemma}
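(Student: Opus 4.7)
The plan is to prove the contrapositive: assume $\mH$ is not chordal, and produce two proper tree decompositions of $\mH$ with distinct sets of bags. By \autoref{lem:tds}, tree decompositions of $\mH$ coincide with those of its clique-graph $G$, so I argue at the level of $G$. Non-chordality of $\mH$ yields an induced chordless cycle $C = v_1 v_2 \cdots v_\ell$ with $\ell \geq 4$ in $G$, and the strategy is to exploit the fact that such a cycle admits two genuinely different triangulations.

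The central step is to build two distinct minimal triangulations $G_1^* \neq G_2^*$ of $G$, where a minimal triangulation is a chordal supergraph of $G$ on the same vertex set that is minimal under edge inclusion. I pick them so that on $V(C)$, the graph $G_1^*$ contains the chord $\{v_1, v_3\}$ but not $\{v_2, v_\ell\}$, while $G_2^*$ contains $\{v_2, v_\ell\}$ but not $\{v_1, v_3\}$: starting from $G \cup \{\{v_1, v_3\}\}$ and greedily adding chord edges to reach chordality, then peeling off redundant edges, produces $G_1^*$; starting instead from $G \cup \{\{v_2, v_\ell\}\}$ produces $G_2^*$. For each $i \in \{1,2\}$, let $(\htree_i, \chi_i)$ be the tree decomposition whose bags are the maximal cliques of $G_i^*$, which exists by \autoref{lem:maxclq} since each $G_i^*$ is chordal. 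Because a chordal graph is determined by its set of maximal cliques -- adjacency of $u$ and $v$ is equivalent to their joint membership in some maximal clique -- the inequality $G_1^* \neq G_2^*$ forces $\bags(\htree_1) \neq \bags(\htree_2)$.

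It remains to verify that each $(\htree_i, \chi_i)$ is proper. Suppose toward contradiction there is a tree decomposition $(\htree', \chi')$ witnessing non-properness: every bag of $\htree'$ lies in some bag of $\htree_i$, yet $\bags(\htree') \not\supseteq \bags(\htree_i)$. Define $G'$ on $V(G)$ by placing an edge $\{u,v\}$ iff $u$ and $v$ share a bag of $\htree'$; then $G'$ is a chordal supergraph of $G$, and the containment assumption forces $G' \subseteq G_i^*$. Minimality of $G_i^*$ gives $G' = G_i^*$. Now every maximal clique $K$ of $G_i^*$ is contained in some bag $B$ of $\htree'$, and since $B$ is itself a clique of $G' = G_i^*$ containing $K$, maximality of $K$ forces $B = K$, so $\bags(\htree') \supseteq \bags(\htree_i)$, a contradiction. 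The main obstacle I anticipate is rigorously guaranteeing that the two local chord choices on $C$ survive the global minimization producing $G_i^*$, so that $G_1^*$ and $G_2^*$ are provably distinct; taking $C$ as a \emph{shortest} chordless cycle and carefully tracking which chords must persist in every minimal triangulation is the cleanest route to closing this gap.
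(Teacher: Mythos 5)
Your overall architecture is sound: reducing to the clique-graph via \autoref{lem:tds}, passing to minimal triangulations, taking the clique-trees of two distinct minimal triangulations, and verifying properness directly. The properness verification (defining $G'$ from the bags of $\htree'$, squeezing $G \subseteq G' \subseteq G_i^*$, and invoking minimality) is correct and essentially re-derives one direction of the bijection between proper tree decompositions and minimal triangulations that the paper simply cites from Lemma~5.4 of~\cite{CarmeliKKK21}. The same goes for the observation that distinct minimal triangulations have distinct sets of maximal cliques.

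The gap is exactly the one you flag at the end, and it is genuine: seeding the triangulation with $\{v_1,v_3\}$ (resp.\ $\{v_2,v_\ell\}$) and then ``peeling off redundant edges'' does not guarantee that the seed chord survives minimization, nor that the other chord is absent from the result (by the Rose--Tarjan--Lueker criterion a fill edge need only be the unique chord of \emph{some} 4-cycle of the triangulation, so both chords of $C$ can coexist in a single minimal triangulation). Tracking which chords persist is avoidable. The paper sidesteps the issue with a simpler construction: take \emph{any} minimal triangulation $T_1$ of $G$; since $G$ is not chordal, $T_1$ has at least one fill edge $e$. Now start from the complete graph on $V(G)$ with $e$ deleted --- this is the almost-clique $A_{|V|}$, which is chordal, and since $e \notin E(G)$ it is a triangulation of $G$ --- and peel fill edges down to a minimal triangulation $T_2$. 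Peeling never adds edges, so $e \notin T_2$ while $e \in T_1$, giving two distinct minimal triangulations with no case analysis on the cycle $C$. Substituting this construction for your $G_1^*, G_2^*$ closes the gap and the rest of your argument goes through unchanged.
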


We identify three classes of hypergraphs (almost-cliques, hypercliques, and acyclic hypergraphs) that are chordal and find their clique embedding powers and submodular widths. 
 
\subparagraph{Almost-cliques} \label{app:tight:chordal:Al}
 Consider the $\ell$-clique where one vertex, say $x_1$, connects to $k$ vertices only, where $1 \leq k < \ell-1$ (hence it is the missing edges from being a $\ell$-clique). We denote such a hypergraph as $A_{\ell}$. To show that $A_{\ell}$ is chordal, we observe that for any cycle of length $\geq 4$ that contains $x_1$, the two adjacent vertices of $x_1$ in the cycle must be connected by an edge in $A_{\ell}$ and that edge is a chord to the given cycle. We also show the following proposition.
 \begin{proposition}\label{tight:Al}
     For an almost-cliques $A_{\ell}$, $\emb = \subw = \fhw = (\ell-1)/2$.
 \end{proposition}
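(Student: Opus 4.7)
The plan is to invoke \autoref{thm:chordal} and reduce the computation to the fractional edge cover numbers of the maximal cliques of $A_\ell$. First, I would verify that $A_\ell$ is chordal, elaborating slightly on the sketch already given: any cycle of length at least $4$ in $A_\ell$ that avoids $x_1$ lies entirely inside the complete subgraph on $\{x_2,\ldots,x_\ell\}$ and thus has plenty of chords, while any such cycle containing $x_1$ has its two neighbors-in-the-cycle lying in $\{x_2,\ldots,x_\ell\}$, which are adjacent in $A_\ell$ and so provide a chord.

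Second, I would identify the maximal cliques of $A_\ell$. Since any subset of $\{x_2,\ldots,x_\ell\}$ is a clique, and since $x_1$ is adjacent only to $N(x_1) \subseteq \{x_2,\ldots,x_\ell\}$, the maximal cliques are exactly $\{x_2,\ldots,x_\ell\}$ (of size $\ell-1$) and $\{x_1\}\cup N(x_1)$ (of size $k+1$). By \autoref{lem:maxclq-hyper}, the bags of every proper tree decomposition of $A_\ell$ are precisely these two sets.

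Third, I would compute the fractional edge cover number of each bag. For the complete graph $K_n$, assigning weight $1/(n-1)$ to every edge yields a fractional edge cover of total weight $n/2$, which matches the standard LP lower bound, so $\rho^*(K_n) = n/2$. Hence the bag $\{x_2,\ldots,x_\ell\}$ contributes $(\ell-1)/2$ and the bag $\{x_1\}\cup N(x_1)$ contributes $(k+1)/2$. Because $k \leq \ell-2$, the maximum is $(\ell-1)/2$, and therefore $\fhw(A_\ell) = (\ell-1)/2$. Applying \autoref{thm:chordal} then yields $\emb = \adw = \subw = \fhw = (\ell-1)/2$. There is essentially no obstacle beyond correctly enumerating the maximal cliques and verifying chordality; all the heavy structural work has already been done by \autoref{thm:chordal}.
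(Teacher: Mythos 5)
Your proposal is correct, but it takes a genuinely different route from the paper. The paper argues directly: for the upper bound it exhibits the two-bag tree decomposition $\{x_1,\dots,x_k\}$, $\{x_2,\dots,x_\ell\}$ and bounds each bag's fractional edge cover by $(\ell-1)/2$; for the lower bound it writes down the explicit embedding $\psi(i)=x_{i+1}$ of the $(\ell-1)$-clique into the complete subgraph on $\{x_2,\dots,x_\ell\}$, which has weak edge depth $2$ since every hyperedge is a graph edge, giving $\emb\geq(\ell-1)/2$, and then closes the gap via \autoref{prop:lowerbound}. You instead route everything through \autoref{thm:chordal}: verify chordality, enumerate the two maximal cliques $\{x_2,\dots,x_\ell\}$ and $\{x_1\}\cup N(x_1)$, compute $\rho^*(K_n)=n/2$, and conclude. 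Your enumeration of maximal cliques and the bound $k+1\leq\ell-1$ are right, and there is no circularity (the proof of \autoref{thm:chordal} does not rely on this proposition; only the converse \autoref{lem:properTD} cites the chordality of almost-cliques). What each approach buys: yours is shorter given the machinery already established, and it yields $\adw=(\ell-1)/2$ for free; the paper's is self-contained and exhibits the concrete clique embedding, which is the object one actually needs to instantiate the lower-bound reduction of \autoref{thm:reduction}. One small point to make explicit in your version: deducing $\fhw\geq(\ell-1)/2$ from the maximal-clique characterization of \emph{proper} tree decompositions implicitly uses that $\fhw$ is always attained by a proper decomposition (any decomposition can be refined to a proper one without increasing any monotone width); this is standard and also implicit in the paper's proof of \autoref{thm:chordal}, but it is worth a sentence.
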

 \begin{proof}
    To prove the claim, suppose WLOG $x_i$ connects only to $x_i$, where $i \in 2, 3, \ldots, k$. Then, we take the decomposition with two bags: $\{x_1,x_2, \dots, x_{k}\}$, $\{x_2, x_3, \dots, x_\ell\}$, where each bag has an edge cover of at most $(\ell-1)/2$ since the first bag induces an $k$-clique and the second bag induces an $(\ell-1)$-clique. Hence, $\fhw \leq (\ell-1)/2$.

On the other hand, consider the embedding $\psi$ from the $(\ell-1)$-clique, where $\psi(i) = x_{i+1}$, $1 \leq i \leq \ell-1$; it is easy to verify that this is a valid embedding such that $\mathsf{wed}(\psi) = 2$, hence $\mathsf{emb} \geq (\ell-1)/2$. Therefore, we have shown that $\mathsf{emb} = \subw = (\ell - 1)/2$ by \autoref{prop:lowerbound}. 
\end{proof}

\subparagraph{Hypercliques} \label{app:tight:chordal:Hlk}
Next, we consider the $(\ell,k)$-hyperclique query $\mathcal{H}_{\ell,k}$, where $1 < k \leq \ell$. This query has $\ell$ variables, and includes as atoms all possible subsets of $\{x_1, \dots, x_\ell\}$ of size exactly $k$. When $k=\ell-1$, the query simply becomes a Loomis-Whitney join~\cite{NgoPRR18}. It is easy to see that $\mathcal{H}_{\ell,k}$ is chordal since the clique-graph of $\mathcal{H}_{\ell,k}$ is a $\ell$-clique.

\begin{proposition} \label{tight:Hlk}
For $\mathcal{H}_{\ell,k}$, we have $\mathsf{emb} = \subw = \fhw = \ell/k$.
\end{proposition}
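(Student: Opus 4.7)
The plan is to invoke \autoref{thm:chordal} to collapse the four-way equality $\emb = \adw = \subw = \fhw$ to a single computation, and then calculate $\fhw(\mathcal{H}_{\ell,k})$ directly via the fractional edge cover LP.

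First I would verify that $\mathcal{H}_{\ell,k}$ is chordal, as already noted in the paragraph preceding the proposition: every pair $\{x_i, x_j\}$ is contained in some size-$k$ hyperedge (extend it to any $k$-subset), so the clique-graph of $\mathcal{H}_{\ell,k}$ is the complete graph $K_\ell$, which is trivially chordal. By \autoref{thm:chordal}, it then suffices to compute any one of $\emb, \adw, \subw, \fhw$; I would compute $\fhw$, since it is the most combinatorially concrete.

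Second, by \autoref{lem:maxclq-hyper}, the bags of any proper tree decomposition of $\mathcal{H}_{\ell,k}$ are exactly the maximal cliques of its clique-graph. But the clique-graph is $K_\ell$, whose only maximal clique is $\{x_1, \dots, x_\ell\}$ itself. Hence the (essentially unique) proper tree decomposition consists of a single bag $V(\mathcal{H}_{\ell,k})$, and consequently
\[
\fhw(\mathcal{H}_{\ell,k}) = \rho^*\bigl(V(\mathcal{H}_{\ell,k})\bigr),
\]
the minimum fractional edge cover number of the whole vertex set.

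Finally I would evaluate $\rho^*(V)$ and show it equals $\ell/k$. For the upper bound, assign weight $1/\binom{\ell-1}{k-1}$ to every hyperedge; since each vertex is contained in exactly $\binom{\ell-1}{k-1}$ hyperedges of size $k$, the covering constraints are satisfied with equality, and the total weight is $\binom{\ell}{k}/\binom{\ell-1}{k-1} = \ell/k$. For the matching lower bound, take any fractional edge cover $\{x_e\}_{e \in E}$ and sum the constraint $\sum_{e \ni v} x_e \geq 1$ over all $v \in V$; since each hyperedge $e$ contains exactly $k$ vertices and therefore contributes $k \cdot x_e$ to the sum, this yields $k \sum_e x_e \geq \ell$, i.e.\ $\sum_e x_e \geq \ell/k$. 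There is no real obstacle here; the entire argument is a short application of the chordal-query theorem plus a one-line LP duality computation on the symmetric structure of $\mathcal{H}_{\ell,k}$.
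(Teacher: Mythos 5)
Your proof is correct, but it takes a different route from the paper's. The paper does not go through \autoref{thm:chordal} at all: it exhibits the identity embedding $\psi(i) = x_i$ of the $\ell$-clique, notes that $\mathsf{wed}(\psi) = k$ because every hyperedge has exactly $k$ vertices, and separately exhibits a fractional edge cover of weight $\ell/k$ (weight $1/k$ on each of the $\ell$ hyperedges of consecutive vertices); the equality then follows from the sandwich $\ell/k \leq \emb \leq \subw \leq \fhw \leq \ell/k$ via \autoref{prop:lowerbound}, with no need to lower-bound $\fhw$ directly. Your argument instead invokes chordality to collapse all four parameters and then solves the fractional edge cover LP on the single bag $V$, supplying both the symmetric feasible solution and the dual counting bound $k\sum_e x_e \geq \ell$. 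Both are sound; the step you should make explicit is that $\fhw$ is a minimum over \emph{all} tree decompositions, so you need the (easy) observation that $V$, being a clique of the clique-graph $K_\ell$, must be contained in some bag of every tree decomposition (by the Helly property, as in \autoref{lem:tds}), whence every decomposition has width at least $\rho^*(V)$. What the paper's route buys is an explicit witness: the identity embedding of $C_\ell$ is exactly what the reduction of \autoref{thm:reduction} consumes, and it tells you the embedding power is already attained at clique size $K=\ell$. What your route buys is a self-contained LP-duality proof that $\fhw = \ell/k$ (the paper only proves the upper bound on $\fhw$ and gets the matching lower bound indirectly through $\emb$), plus the additional conclusion $\adw = \ell/k$ for free.
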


\begin{proof}
First, we show that $\fhw \leq \ell/k$. Indeed, there is a fractional edge cover that assigns a weight of $1/k$ to each hyperedge that contains $k$ consecutive vertices in $\{x_1, \dots, x_\ell\}$ (let the successor of $x_{\ell}$ be $x_{1}$). The fractional edge cover is then $\ell/k$. We show next that this bound coincides with $\mathsf{emb}$.

We simply define the embedding $\psi$ from a $\ell$-clique as $\psi(i) = x_i, i\in [\ell]$. Then, $\mathsf{wed}(\psi) = k$ since every hyperedge has exactly $k$ vertices. Therefore, $\mathsf{emb} \geq \ell / k \geq \fhw$ and we conclude by applying \autoref{lem:cycle} and \autoref{prop:lowerbound}.
\end{proof}

 \subparagraph{Acyclic Hypergraphs}
First, we claim that acyclic queries are indeed chordal queries. 
\begin{lemma}\label{lem:acyclic}
    An acyclic hypergraph $\mH$ is chordal.
\end{lemma}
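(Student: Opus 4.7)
The plan is to prove the contrapositive via the tree decomposition witnessing acyclicity. Suppose, for contradiction, that $\mH$ is acyclic but its clique-graph contains a chordless cycle $C\colon v_1 v_2 \cdots v_\ell v_1$ with $\ell \geq 4$. Since $\mH$ is acyclic, there is a tree decomposition $(\htree,\chi)$ in which every bag equals some hyperedge of $\mH$. For each $i \in [\ell]$, let $\htree_i$ be the subtree of $\htree$ induced by the nodes $t$ with $v_i \in \chi(t)$; it is a non-empty connected subtree by the tree decomposition axioms. The crucial observation is that, because every bag is a hyperedge, two vertices $v_i,v_j$ share a bag of $\htree$ if and only if they lie in a common hyperedge, i.e.\ if and only if they are adjacent in the clique-graph. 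Combined with the chordlessness of $C$, this yields the ``iff-characterization'': $\htree_i \cap \htree_j \neq \emptyset$ holds exactly when $i$ and $j$ are cyclically consecutive on $C$.

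The contradiction then comes from a bridge argument in the tree $\htree$. Since $\htree_1$ and $\htree_3$ are disjoint subtrees, we may pick an edge $e \in E(\htree)$ on the unique path between them so that deleting $e$ splits $\htree$ into components $T_1 \supseteq \htree_1$ and $T_3 \supseteq \htree_3$. The subtree $\htree_2$ is connected and meets both $T_1$ (via $\htree_1$) and $T_3$ (via $\htree_3$), so $\htree_2$ must contain the edge $e$. Likewise, the subgraph $U = \bigcup_{j=4}^{\ell} \htree_j$ is connected, since consecutive $\htree_j, \htree_{j+1}$ intersect, and $U$ meets $T_1$ (via $\htree_\ell \cap \htree_1$) and $T_3$ (via $\htree_4 \cap \htree_3$). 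A short induction on $j=4,5,\ldots,\ell$ shows that in fact some individual $\htree_j$ has vertices in both $T_1$ and $T_3$, and hence contains the edge $e$ itself.

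This gives a common endpoint of $e$ in $\htree_2 \cap \htree_j$. But $j \in \{4,\ldots,\ell\}$ together with $\ell \geq 4$ forces the indices $2$ and $j$ to be non-consecutive on $C$, and the iff-characterization then demands $\htree_2 \cap \htree_j = \emptyset$, the desired contradiction. The only delicate step in this plan is the induction that passes from ``the union $U$ crosses $e$'' to ``some individual $\htree_j$ crosses $e$''; it only uses connectedness of each $\htree_j$, the fact that consecutive $\htree_j, \htree_{j+1}$ share a vertex, and that $\htree_\ell, \htree_4$ have vertices in the opposite components $T_1, T_3$. Everything else is bookkeeping about tree decompositions and the Helly-like property that makes the $\htree_i$ subtrees.
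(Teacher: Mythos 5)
Your proof is correct, but it takes a genuinely different route from the paper's. The paper argues by induction on the number of hyperedges: it takes a join forest of the acyclic hypergraph, peels off a leaf hyperedge $e_\ell$, applies the induction hypothesis to $\mH' = (V, E\setminus\{e_\ell\})$, and handles separately the case where the putative chordless cycle uses a clique-graph edge available only inside $e_\ell$ (in which case $e_\ell$ supplies the chord). You instead work globally with the tree decomposition whose bags are hyperedges: you observe that two vertices are adjacent in the clique-graph exactly when their subtrees $\htree_i, \htree_j$ of $\htree$ intersect, and then run the standard edge-separator argument showing that the intersection graph of subtrees of a tree has no chordless cycle of length $\geq 4$ (the Gavril--Buneman--Walter characterization of chordal graphs). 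In effect you reprove one direction of that classical theorem and apply it to the clique-graph. Your version avoids the induction and the paper's slightly delicate step of extracting a vertex private to the leaf hyperedge, at the cost of the bridge/separator bookkeeping; it also dovetails nicely with the Helly-property arguments the paper already uses in Lemmas~\ref{lem:decomp} and~\ref{lem:tds}. Both proofs are sound; yours is arguably the more self-contained and more standard derivation.
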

\begin{proof}
 We prove by induction on the number of hyperedges in the hypergraph $\mH = (V, E)$. If $|E| = 1$, it is clique-graph is a clique, thus it is chordal. The induction hypothesis assumes the claim for acyclic hypergraphs with $|E| \leq k$ hyperedges. Let $\mH$ be an acyclic hypergraph such that $|E| = k+ 1$. Since $\mH$ is acyclic, it has a join forest whose vertices are the hyperedges of $\mH$. Let $e_{\ell} \in E$ be a leaf of the join forest and it is easy to show that $\mH' = (V, E \setminus \{e_{\ell}\})$ is an acyclic hypergraph with $k$ hyperedges. For any cycle in the clique-graph of $\mH$ having length $\geq 4$, we discuss the following two cases.

 If every edge of the cycle is in the clique-graph of $\mH'$: by the induction hypothesis, there is a chord for this cycle in the clique-graph of $\mH'$ (thus also in $\mH$).

 Otherwise, there is an edge $e$ in the cycle that is in the clique-graph of $\mH$, but not in the clique-graph of $\mH'$: therefore, the edge $e$ is only contained by $e_{\ell}$. This implies that there is a vertex $u$ that is only contained by $e_{\ell}$, not by any other edges in $E$. Let $\{u, v\}$ and $\{u, w\}$ be the edges connecting $u$ in the given cycle, we know that $\{u, v, w\} \subseteq e_{\ell}$ and thus, $\{v, w\}$ is a chord for the given cycle.
\end{proof}

 Now we prove the following theorem for acyclic hypergraphs:
\begin{theorem} \label{tight:acyclic}
 For an acyclic hypergprah $\mH$, $\emb = \adw = \subw = \fhw = 1$.
\end{theorem}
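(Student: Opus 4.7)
The plan is to deduce this theorem as an immediate corollary of the two results already in hand. By \autoref{lem:acyclic}, every acyclic hypergraph $\mH$ is chordal. Therefore \autoref{thm:chordal} applies and yields
\[
\emb(\mH) \;=\; \adw(\mH) \;=\; \subw(\mH) \;=\; \fhw(\mH).
\]
Thus the only remaining task is to pin down the common value, and I would show that it equals $1$ by computing $\fhw(\mH)$ directly from the standard characterization of acyclicity.

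For the upper bound $\fhw(\mH) \leq 1$, I would invoke the defining property of acyclic hypergraphs recalled in the Background section: there exists a tree decomposition $(\htree, \chi)$ of $\mH$ in which every bag $\chi(t)$ coincides with some hyperedge $e_t \in E(\mH)$. In such a bag, assigning weight $1$ to the hyperedge $e_t$ itself already constitutes a fractional edge cover of $\chi(t)$, so $\rho^*(\chi(t)) \leq 1$ for every $t$, giving $\fhw(\mH) \leq 1$. For the matching lower bound, assuming $\mH$ has at least one hyperedge (otherwise the problem is trivial), any tree decomposition must contain a bag covering some hyperedge $e \neq \emptyset$, and the fractional edge cover number of any nonempty vertex set is at least $1$. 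Hence $\fhw(\mH) \geq 1$.

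Combining these two directions gives $\fhw(\mH) = 1$, and propagating through the chain of equalities from \autoref{thm:chordal} yields the claim. There is no genuine obstacle here: the only thing worth verifying carefully is that \autoref{thm:chordal} can be invoked as a black box, which it can since \autoref{lem:acyclic} supplies its sole hypothesis. The argument is therefore essentially a bookkeeping step that bundles Yannakakis's classical observation about acyclic queries with the new equality of widths on chordal hypergraphs.
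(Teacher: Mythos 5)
Your proposal is correct and follows essentially the same route as the paper: both reduce the theorem to \autoref{lem:acyclic} plus \autoref{thm:chordal} and then pin the common value at $1$. The only difference is cosmetic — the paper cites the known fact $\subw=\fhw=1$ for acyclic hypergraphs (and notes $\emb\geq 1$ via \autoref{prop:emb:properties}), whereas you verify $\fhw=1$ directly from the hyperedge-bag tree decomposition, which is a perfectly valid substitute.
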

\begin{proof}
    From \autoref{prop:emb:properties}, we know that $\emb \geq 1$. Since it is known that $\subw = \fhw = 1$, the theorem is then a direct result from \autoref{lem:acyclic} and \autoref{thm:chordal}.
\end{proof}
 
\section{Gap Between Clique Embedding Power and Submodular Width}\label{sec:boat}

In this section, we discuss the boat query and its variant depicted in Figure~\ref{fig: boat}, where gaps between the clique embedding power and submodular width can be shown.


\begin{figure}[h]
\centering
\begin{subfigure}{0.4\textwidth}
\centering
\begin{tikzpicture}
    \node[shape=circle] (x1) at (-1.5,0) {$x_1$};
    \node[shape=circle] (x4) at (-0.5,0) {$x_4$};
    \node[shape=circle] (x5) at (0.5,0) {$x_5$};
    \node[shape=circle] (x8) at (1.5,0) {$x_8$};
    \node[shape=circle] (x2) at (-0.5,1) {$x_2$};
    \node[shape=circle] (x3) at (0.5,1) {$x_3$};
    \node[shape=circle] (x6) at (-0.5,-1) {$x_6$};
    \node[shape=circle] (x7) at (0.5,-1) {$x_7$};
    
    \draw (x1) -- (x2) ;
    \draw (x1) -- (x4) ;
    \draw (x1) -- (x6) ;
    \draw (x2) -- (x3) ;
    \draw (x4) -- (x5) ;
    \draw (x6) -- (x7) ;
    \draw (x3) -- (x8) ;
    \draw (x5) -- (x8) ;
    \draw (x7) -- (x8) ;
\end{tikzpicture}
\captionsetup{justification=centering}
\caption{Boat Query $Q_b$}
\end{subfigure}
\begin{subfigure}{0.4\textwidth} 
\centering
\begin{tikzpicture}
    \node[shape=circle] (y1) at (-1,1) {$y_1$};
    \node[shape=circle] (y2) at (-1,0) {$y_2$};
    \node[shape=circle] (y3) at (-1,-1) {$y_3$};
    \node[shape=circle] (z1) at (1,1) {$z_1$};
    \node[shape=circle] (z2) at (1,0) {$z_2$};
    \node[shape=circle] (z3) at (1,-1) {$z_3$};
    
    \draw (z1) -- (y1) ;
    \draw (z2) -- (y2) ;
    \draw (z3) -- (y3) ;
    
    \node (box1) [draw=black, inner sep=1mm, rectangle, rounded corners,  fit = (y1) (y3)] {};
    \node (box2) [draw=black, inner sep=1mm, rectangle, rounded corners,  fit = (z1) (z3)] {};
\end{tikzpicture}
\captionsetup{justification=centering}
\caption{Hyper-boat Query $Q_{hb}$}
\end{subfigure}
\captionsetup{justification=centering}
\caption{The Boat query and its variant, the Hyper-boat Query}
\label{fig: boat}


    
    
\end{figure}


    
    

\subsection{Clique Embedding Power, Submodular Width and Adaptive Width} \label{sec:gaps}
\begin{figure}[h]
\centering
\begin{subfigure}{0.47\textwidth}
\centering
\begin{tikzpicture}
    \node[shape=circle] (x1) at (-3,0) {1,6-11,16};
    \node[shape=circle] (x4) at (-1,0) {5-11};
    \node[shape=circle] (x5) at (1,0) {3-5};
    \node[shape=circle] (x8) at (3,0) {2-4,12-17};
    \node[shape=circle] (x2) at (-1,1) {1,2,6-8};
    \node[shape=circle] (x3) at (1,1) {1,2,12-15};
    \node[shape=circle] (x6) at (-1,-1) {9-11,16-17};
    \node[shape=circle] (x7) at (1,-1) {12-17};
    \draw (x1) -- (x2) ;
    \draw (x1) -- (x4) ;
    \draw (x1) -- (x6) ;
    \draw (x2) -- (x3) ;
    \draw (x4) -- (x5) ;
    \draw (x6) -- (x7) ;
    \draw (x3) -- (x8) ;
    \draw (x5) -- (x8) ;
    \draw (x7) -- (x8) ;
\end{tikzpicture}
\captionsetup{justification=centering}
\caption{$\mathsf{emb}$ for $Q_{b}$}
\end{subfigure}
\begin{subfigure}{0.47\textwidth}
\centering
\begin{tikzpicture}
    \node[shape=circle] (y1) at (0,0) {1-3};
    \node[shape=circle] (y2) at (0,1) {2,3};
    \node[shape=circle] (y3) at (0,2) {1,4};
    \node[shape=circle] (z1) at (2,0) {7};
    \node[shape=circle] (z2) at (2,1) {5,6};
    \node[shape=circle] (z3) at (2,2) {4-6};
    
    \draw (y1) -- (z1) ;
    \draw (y2) -- (z2) ;
    \draw (y3) -- (z3) ;
    
    \node (box1) [draw=black, inner sep=1mm, rectangle, rounded corners,  fit = (y1) (y3)] {};
    \node (box2) [draw=black, inner sep=1mm, rectangle, rounded corners,  fit = (z1) (z3)] {};
\end{tikzpicture}
\captionsetup{justification=centering}
\caption{$\mathsf{emb}$ for $Q_{hb}$}
\end{subfigure}

    
    
\captionsetup{justification=centering}
\caption{Optimal embedding for the boat query and its variants}
\label{fig: embedding for boat}
\end{figure}

Using MILP~\eqref{MIP}, we find the optimal clique embedding for $Q_{b}$ and $Q_{hb}$, as illustrated in Figure~\ref{fig: embedding for boat}. The numbers represent the vertices from the clique, and we adopt the shorthand notation, say, 6-8 to refer to the set \{6,7,8\}. The clique embedding powers for $Q_b$ and $Q_{hb}$ are $\frac{17}{9}$ and $\frac{7}{4}$, respectively. However, \cite{JoglekarR18} proves that for the boat query, $\subw(Q_b) = 2$. This implies a gap since $\emb(Q_b) = 17 /9 < \subw(Q_b) = 2$.
We show that for the hyper-boat query $Q_{hb}$, there is also a gap between the optimal clique embedding power and submodular width. In particular, we show that $\subw(Q_{hb}) = 2$ in the following proposition, which implies the following gap: $\emb(Q_{hb}) = \frac{7}{4} < \subw(Q_{hb}) =2 $. Its proof can be found in~\cite{arxiv}.

\begin{proposition} \label{prop:subwboat}
 For $Q_{hb}$, we have $\subw(Q_{hb}) = 2$. 
\end{proposition}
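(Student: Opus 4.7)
The plan is to prove matching upper and lower bounds on $\subw(Q_{hb})$.

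\emph{Upper bound.} I would use the trivial tree decomposition consisting of a single bag $V(Q_{hb}) = \{y_1,y_2,y_3,z_1,z_2,z_3\}$, which is plainly a valid decomposition since every hyperedge is contained in it. For any non-negative, monotone, submodular, edge-dominated set function $F$, submodularity together with $F(\emptyset) = 0$ implies subadditivity, so
\[
F(V(Q_{hb})) \leq F(\{y_1,y_2,y_3\}) + F(\{z_1,z_2,z_3\}) \leq 1 + 1 = 2.
\]
Hence $\subw(Q_{hb}) \leq 2$.

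\emph{Lower bound.} I would construct an explicit entropic $F$ that witnesses $\subw(Q_{hb}) \geq 2$. Let $N$ be a large power of $2$, and let $X_1,X_2,X_3,W_1,W_2,W_3$ be mutually independent uniform random variables on $\{0,1\}^{(\log N)/2}$. Define $Y_i := \bigoplus_{j \neq i} X_j$ and $Z_i := \bigoplus_{j \neq i} W_j$, and set
\[
F(S) := \frac{1}{\log N}\, H\!\left(\{Y_i : y_i \in S\} \cup \{Z_i : z_i \in S\}\right).
\]
Since Shannon entropy is non-negative, monotone, and submodular, so is $F$. To check edge-dominance: the dependency $Y_1 \oplus Y_2 \oplus Y_3 = 0$ together with the fact that any two of the $Y_i$'s are independent and each uniform on $\{0,1\}^{(\log N)/2}$ gives $H(Y_1,Y_2,Y_3) = H(Y_1,Y_2) = \log N$; symmetrically for the $Z$'s; and independence of the $X$'s from the $W$'s forces $H(Y_i,Z_i) = \log N$. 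Crucially, for distinct $i,j$ and distinct $k,l$,
\[
F(\{y_i,y_j,z_k,z_l\}) = \frac{H(Y_i,Y_j) + H(Z_k,Z_l)}{\log N} = 2.
\]

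The heart of the lower bound is the combinatorial claim that every tree decomposition $(\htree,\chi)$ of $Q_{hb}$ admits a bag containing at least two distinct $y$-vertices and at least two distinct $z$-vertices; combined with monotonicity of $F$, this yields a bag of $F$-value $\geq 2$. To prove the claim, pick bags $B_y \supseteq \{y_1,y_2,y_3\}$, $B_z \supseteq \{z_1,z_2,z_3\}$, and $B_i \supseteq \{y_i,z_i\}$ for $i=1,2,3$, and consider the unique path $B_y = M_0, M_1, \ldots, M_k = B_z$ in $\htree$. By connectedness of each vertex's subtree there exist $a_i,b_i \in \{0,\ldots,k\}$ with $y_i \in M_j$ iff $j \leq a_i$, and $z_i \in M_j$ iff $j \geq b_i$. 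Because the branch to $B_i$ leaves the main path at some node $M_{c_i}$ lying on both the $y_i$- and $z_i$-subtrees, $b_i \leq c_i \leq a_i$ for every $i$. The bag $M_j$ contains at least two $y$'s exactly when the second-largest value among $\{a_1,a_2,a_3\}$ is $\geq j$, and at least two $z$'s exactly when the second-smallest among $\{b_1,b_2,b_3\}$ is $\leq j$. A short pigeonhole argument on the pairing $i \mapsto (b_i,a_i)$, using $b_i \leq a_i$ for each $i$, shows that the second-smallest $b_i$ does not exceed the second-largest $a_i$, so a valid index $j$ exists.

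The main obstacle is this combinatorial claim: one must carefully exploit the subtree connectedness along the $B_y$-to-$B_z$ path and carry out the sorting/pigeonhole step to guarantee a common ``two-$y$-and-two-$z$'' position on the path. Once this is established, the entropic computation of $F$ lifts the set-containment to the numerical inequality $F(M_j) \geq 2$, closing the gap with the upper bound to conclude $\subw(Q_{hb}) = 2$.
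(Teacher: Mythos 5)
Your proof is correct, and the lower-bound argument takes a genuinely different route from the paper's. The paper proves the key combinatorial fact (every tree decomposition has a bag meeting $\{y_1,y_2,y_3\}$ and $\{z_1,z_2,z_3\}$ in two vertices each) by passing to \emph{proper} tree decompositions, invoking the bijection with minimal triangulations of the clique-graph, and doing a case analysis on which chords of the $4$-cycles get filled (using the Rose--Tarjan--Lueker characterization of minimal triangulations). Your path argument --- intersecting each vertex's subtree with the $B_y$-to-$B_z$ path to get a prefix $[0,a_i]$ for $y_i$ and a suffix $[b_i,k]$ for $z_i$, using the bag for the edge $\{y_i,z_i\}$ to force $b_i\le a_i$, and then observing that the second-smallest $b$ cannot exceed the second-largest $a$ --- is self-contained, elementary, and applies directly to arbitrary tree decompositions without the detour through proper decompositions and triangulation theory. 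Your certificate function is in fact numerically identical to the paper's explicitly tabulated $h$ (e.g.\ $F(\{y_i\})=1/2$, $F(\{y_i,y_j,z_k\})=3/2$, $F(\{y_i,y_j,z_k,z_\ell\})=2$), but realizing it entropically via the XOR construction buys you monotonicity and submodularity for free, whereas the paper leaves the verification of these properties for its table implicit. The upper bounds are essentially equivalent (single-bag subadditivity versus the fractional edge cover of weight $2$); note only that the paper's definition of $\subw$ does not explicitly impose $F(\emptyset)=0$, but your subadditivity step survives because non-negativity of $F(A\cap B)$ suffices.
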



\subsection{Subquadratic Equivalence Between Boat Queries}\label{sec:subquadratic}

In this section, we demonstrate an interesting connection between the two boat queries. To start, let's consider $Q_b$ and $Q_{hb}$. Both queries admit an algorithm that runs in time $O(|I|^2)$. Informally, we are going to show that either both queries can be executed significantly faster, or neither can. Following the seminal paper by Williams and Williams~\cite{subcubic}, we define {\em truly subquadratic algorithm} and {\em subquadratic equivalence}. 
\begin{definition}
An algorithm is said to be \emph{truly subquadratic} if it runs in time $O(m^{2-\epsilon})$ for some $\epsilon > 0$ ($m$ is the input size).
\end{definition}

Two problems $A$ and $B$ are \emph{subquadratic equivalent} if $A$ admits a truly subquadratic algorithm iff $B$ admits a truly subquadratic algorithm.
We show that the two boat queries are subquadratic equivalent. 

\begin{theorem} \label{thm:subquadratic}
$Q_b$ is subquadratic equivalent to $Q_{hb}$.
\end{theorem}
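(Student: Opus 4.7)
The plan is to establish two reductions, one in each direction of the claimed equivalence.

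For the direction $Q_{hb} \leq Q_b$, the plan is a size-preserving gadget reduction. Given a $Q_{hb}$ instance of size $m$ with relations $R_Y, R_Z, R_1, R_2, R_3$, I would introduce, for each tuple $t = (y_1, y_2, y_3) \in R_Y$, a fresh element $\mathrm{id}_Y(t)$ in $\mathsf{Dom}(x_1)$, and set $R_{12}(\mathrm{id}_Y(t), y_1) = R_{14}(\mathrm{id}_Y(t), y_2) = R_{16}(\mathrm{id}_Y(t), y_3) = 1$. The symmetric construction encodes $R_Z$ into the star at $x_8$, and the middle relations $R_{23}, R_{45}, R_{67}$ of $Q_b$ are identified with $R_1, R_2, R_3$ of $Q_{hb}$. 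A routine bijection between the satisfying assignments of the two instances verifies correctness, and since the constructed $Q_b$ instance has total size $O(m)$, any truly subquadratic algorithm for $Q_b$ immediately yields one for $Q_{hb}$.

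For the direction $Q_b \leq Q_{hb}$, the starting observation is that $Q_b$ is semantically equivalent to the $Q_{hb}$ instance over the implicit projections $R_Y(x_2, x_4, x_6) := \exists x_1\, R_{12}(x_1, x_2) \wedge R_{14}(x_1, x_4) \wedge R_{16}(x_1, x_6)$ and $R_Z(x_3, x_5, x_7)$ defined symmetrically at $x_8$. Direct materialization of these projections may blow up to $\Theta(m^3)$, destroying any hope of subquadratic runtime, so the plan is to apply a heavy-light decomposition on both $\mathsf{Dom}(x_1)$ and $\mathsf{Dom}(x_8)$. Fix a threshold $\tau$ and call $a_1 \in \mathsf{Dom}(x_1)$ \emph{heavy} if any of the three neighborhoods $N_{12}(a_1), N_{14}(a_1), N_{16}(a_1)$ has size exceeding $\tau$, and \emph{light} otherwise; symmetrically for $a_8$. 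In the light/light case, the contributions to $R_Y$ and $R_Z$ total $O(m \tau^2)$ tuples, which can be materialized and fed to the assumed $Q_{hb}$ algorithm at cost $O((m \tau^2)^{2-\epsilon})$. In the heavy cases, we iterate over the $O(m/\tau)$ heavy values; for each heavy $a_1$ the residual query can be solved in $O(m)$ time by first materializing the three length-2 paths $R_{12} \bowtie R_{23}$, $R_{14} \bowtie R_{45}$, $R_{16} \bowtie R_{67}$ restricted to $a_1$, and then performing a star-check at $x_8$ via hashing; heavy $a_8$ is handled symmetrically.

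The main obstacle is calibrating $\tau$ so that all phases are simultaneously truly subquadratic, as well as verifying the heavy-phase subroutine does not itself encounter a cubic blowup. Choosing $\tau = m^{\alpha}$ with $\alpha \in \bigl(0,\, \epsilon/(4-2\epsilon)\bigr)$ makes the light/light cost $(m\tau^2)^{2-\epsilon} = m^{2-\epsilon+(4-2\epsilon)\alpha}$ and the heavy-enumeration cost $O(m^{2-\alpha})$ both bounded by $m^{2-\delta}$ for some $\delta > 0$ depending on $\epsilon$. Correctly combining the heavy-$x_1$, heavy-$x_8$, and light/light cases covers all possible solutions and completes the proof, establishing the desired subquadratic equivalence.
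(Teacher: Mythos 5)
Your proof is correct and follows essentially the same route as the paper's: the easy direction via a tuple-identifier (projection) encoding of $R_Y$ and $R_Z$ into the two stars of $Q_b$, and the hard direction via a heavy-light split on $x_1$ and $x_8$ with threshold $m^\alpha$, solving the heavy cases by fixing the heavy value and running a linear-time acyclic-query subroutine, and the light/light case by materializing $R_Y,R_Z$ of size $O(m\tau^2)$ and invoking the assumed $Q_{hb}$ algorithm; your calibration $\alpha < \epsilon/(4-2\epsilon)$ matches the paper's choice exactly.
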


\begin{proof}
It's easy to see that a truly subquadratic algorithm for $Q_{b}$ gives a truly subquadratic algorithm for $Q_{hb}$. Indeed, given an input instance $I_{hb}$ of $Q_{hb}$, we can form an input instance $I_b$ of $Q_b$ where the table $(x_1,x_2)$ is the projection of the table $(y_1,y_2,y_3)$ in $I_{hb}$, and similar for the tables $(x_1,x_4), (x_1,x_6), (x_3,x_8), (x_5,x_8)$ and $(x_7, x_8)$. We then solve $I_b$ by the algorithm for $Q_b$. It is easy to see that this algorithm is correct and runs in truly quadratic time.

The converse direction needs more work, since if we were to simply create the table $(y_1, y_2, y_3)$ for $Q_{hb}$ by joining the tables $(x_1, x_2), (x_1,x_4)$ and $ (x_1, x_6)$ for $Q_b$, the size of the result might be significantly greater than all previous tables. For example, if the sizes of the tables $(x_1, x_2), (x_1,x_4)$ and $ (x_1, x_6)$ are all $m$, then joining them could result in a table of size $m^{\frac{3}{2}}$ and therefore calling the algorithm for $Q_{hb}$ on this instance does not necessarily yield a truly subquadratic algorithm for $Q_{b}$.

We perform our fine-grained reduction based on \emph{heavy-light split}. Our goal is to give a subquadratic algorithm for $Q_b$ assuming there is one such algorithm for $Q_{hb}$. Suppose the subquadratic algorithm for $Q_b$ runs in time $O(m^{2-\delta})$ for some $\delta > 0$, where $m$ is the size of all tables. Our algorithm for $Q_b$ runs as follows. First, it checks whether there are entries of attribute $x_1$ that has degree more than $\Delta := m^\epsilon$ in tables $(x_1,x_2), (x_1,x_4)$ and $(x_1,x_6)$ for some $\epsilon > 0$ to be specified later. Those are called \emph{heavy} and there are at most $\frac{m}{\Delta}$ many of them. For those entries, we fix each one so that the remaining query becomes acyclic, and thus can be solved in linear time by Yannakakis algorithm~\cite{Yannakakis81}. We do the same procedure for heavy entries of attribute $x_8$. Therefore, any result of $Q_{b}$ that contains a heavy entry in attributes $x_1$ or $x_8$ will be detected in time $O(m^{2-\epsilon})$. It remains to consider the case where the entries of attributes $x_1$ and $x_8$ have degrees less than $\Delta$, which are called \emph {light}. In this case, we loop over all light entries of $x_1$ in the table $(x_1,x_2)$ and directly join them with the tables $(x_1,x_4)$ and $(x_1,x_6)$ and project the result to build a table $(x_2,x_4,x_6)$. We then do the same procedure for joining $x_8$. This will cost time $O(m\cdot \Delta \cdot \Delta) = O(m^{1+2\epsilon})$. We then call the $O(m^{2-\delta})$ algorithm for $Q_{hb}$, which cost time $O(m^{(1+2\epsilon)(2-\delta)})$. By choosing $0 < \epsilon < \frac{\delta}{4-2\delta}$ (note that $\delta < 2$), we observe that the whole algorithm for $Q_b$ takes time $O(m^{2-\epsilon}) + O(m^{(1+2\epsilon)(2-\delta)}) = O(m^{2-\epsilon'})$ for some $\epsilon' > 0$.
\end{proof}

We remark that the reduction from $Q_{hb}$ to $Q_b$ is parametrized by the running time of the algorithm for  $Q_{hb}$. That is, the reduction is not uniform in the sense that only after given $\delta>0$ can we specify a suitable $\epsilon$. Theorem~\ref{thm:subquadratic} implies that either both boat queries admit a truly subquadratic algorithm or none of them does.

The fact that there is a gap between $\subw(Q_{hb}) = 2$ and $\mathsf{emb}(Q_{hb}) = \frac{7}{4}$ suggests currently our lower bound does not match with the best upper bound, i.e., $\PANDA$. This implies either that $\PANDA$ is not universally optimal, or that we are missing the best possible lower bound.  We leave this as an open question. 

Finally, we note that Theorem 4 in~\cite{JoglekarR18}, which proves there does not exist a $\tilde{O}(m^{2-\epsilon}+|\text{OUT}|\footnote{$|\text{OUT}|$ is the size of the output.})$ algorithm for the boat query unless 3-XOR can be solved in time $\tilde{O}(m^{2-t})$ for a $t>0$, does not directly translate into the quadratic hardness for the boat query in our case. This is because their reduction uses the output of the boat query in an essential way to ``hack back the collision'' which is not available in the Boolean case.
\section{Related Work}

\subparagraph*{Fine-Grained Complexity} The study of fine-grained complexity aims to show the (conditional) hardness of easy problems. Recent years have witnessed a bloom of development into this fascinating subject, resulting in many tight lower bounds which match exactly, or up to poly log factors, the running time of best-known algorithms~\cite{kclique, monochromatic,subcubic,ValiantParser}. Among many others, popular hardness assumptions include the Strong Exponential Time Hypothesis (SETH), Boolean Matrix Multiplication (BMM), and All-Pairs Shortest Paths (APSP). Our work can be seen as a particular instance under this framework, i.e. using Boolean or Min-Weight $k$-Clique Conjecture to show conditional lower bounds for BCQs. Interestingly, our reduction of $k$-cycles essentially mirrors the construction in the proof of Theorem 3.1 in~\cite{kclique}.

\subparagraph*{Conjunctive Queries (CQs) Evaluation} The efficient evaluation of CQs constitutes the core theme of database theory. Khamis, Ngo, and Suciu introduced in~\cite{PANDA} the $\mathsf{PANDA}$ algorithm that runs in time as predicted by the submodular width of the query hypergraph. This groundbreaking result establishes a profound connection between various lines of work on tree decompositions \cite{Marx10, Marx13}, worst-case optimal join algorithms \cite{NgoRR13, NgoPRR18}, and the interplay between CQ evaluation and information theory \cite{KhamisK0S20, zhao2023space, FAQAI}.

\subparagraph*{Functional Aggregate Queries (FAQs)} FAQs~\cite{FAQ} provides a Sum-of-Product framework that captures the semantics of conjunctive queries over arbitrary semirings. The semiring point-of-view originated from the seminal paper~\cite{GreenKT07}. Khamis, Ngo, and Rudra \cite{FAQ} initiate the study of the efficient evaluation of FAQs. \cite{FAQAI} introduces the FAQ version of the submodular width $\#\subw$ and the $\#\textsf{PANDA}$ algorithm (as the FAQ version of the $\mathsf{PANDA}$ algorithm) that achieves the runtime as predicated by $\#\subw$. We show that the embedding from a $k$-clique into a hypergraph holds for arbitrary semirings, which enables one to transfer the hardness of $k$-clique to FAQ independent of the underlying semiring. To the best of our knowledge, this is the first \emph{semiring-oblivious} reduction. 

\subparagraph*{Enumeration and Preprocessing} Bagan, Durand and Grandjean characterized in \cite{BaganDG07} when a constant delay and linear preprocessing algorithm for self-join-free conjunctive queries  is possible. 
A recent paper~\cite{CarmeliS22} makes an initial foray towards the characterization of conjunctive queries with self-joins. Also recently, \cite{Nofar22} identifies new queries which can be solved with linear preprocessing time and constant delay. Their hardness results are based on the Hyperclique conjecture, the Boolean Matrix Multiplication conjecture, and the 3SUM conjecture.
\section{Conclusion}

In this paper, we study the fine-grained complexity of BCQs. We give a semiring-oblivious reduction from the $k$-clique problem to an arbitrary hypergraph. Assuming the Boolean $k$-Clique Conjecture, we obtain conditional lower bounds for many queries that match the combinatorial upper bound achieved by the best-known algorithms, possibly up to a poly-logarithmic factor. 

One attractive future direction is to fully unravel the gap between the clique embedding power and submodular width, where improved lower bounds or upper bounds are possible.
The Boolean $k$-Clique Conjecture states that there is no $O(n^{k-\epsilon})$ \emph{combinatorial} algorithm for detecting $k$-cliques. One future direction is to base the hardness assumption over Ne{\v{s}}et{\v{r}}il and Poljak's algorithm~\cite{nevsetvril1985complexity}, which solves the $k$-clique problem in $O\left(n^{(\omega / 3) k}\right)$ by leveraging fast matrix multiplication techniques and show lower bounds for any algorithm. 

\bibliography{ref}

\newpage
\appendix
\section{Missing Details from \autoref{sec:emb}} \label{app:emb}

\begin{proof}[Proof of \autoref{thm:adw}]
Let $\mathsf{ed}(C_k \mapsto \mH)=\alpha$. Then, there is an embedding $\psi: C_k \mapsto \mH$ with edge depth $\alpha$. We will show that $\mathsf{adw}(\mH) \geq k/\alpha$.

First, we define the following function over subsets of $V(\mH)$: for any $S \subseteq V(\mH)$, let $\mu(S) = \sum_{v \in S} d_\psi(v)/\alpha$. This function forms a fractional independent set, since  for any hyperedge $e$, we have $\mu(e) = \sum_{v \in e } d_\psi(v)/\alpha = d^+_\psi(e)/\alpha \leq 1$.

Now, consider any decomposition $(\htree, \chi)$ of $\mH$. From Lemma~\ref{lem:decomp}, there is a node $t \in T$ such that or every $i=1, \dots, k$, $\psi(i) \cap \chi(t) \neq \emptyset$. Hence, $\mu(B_t) =  \sum_{v \in B_t} d_\psi(v) /\alpha \geq k/\alpha$. Thus, the adaptive width of the decomposition is at least $k/\alpha$.
\end{proof}

\section{Missing Details from \autoref{sec: decidability}} \label{app:decidable}

\begin{proof}[Proof of \autoref{thm: fractional emb}]
Proposition~\ref{prop:milp} shows that to compute ${\emb}(\mH)$ it suffices to solve MILP~\eqref{MIP}. To solve the MILP, we can sequentially fix the assignments of the binary variables $y_S$ and then solve a linear program. Note that the remaining linear program might have exponentially many conditions in the size of $\mH$, since the number of variables is $2^{|V|}$. The proof is then completed by observing the number of assignments of $y_S$ are doubly exponential in $\mH$.
\end{proof}

\begin{proof}[Proof of \autoref{prop:Kbound}]
   We simply find an upper bound for the least common multiplier $K$ such that $K \cdot x_S$ are integers ($x_S$ are the variables in MILP~\eqref{MIP}). Following the backward direction of the previous proof, we know that for this $K$, we have $\emb_K(\mH) = \emb(\mH)$.
The MILP has $O(2^{|V|})$ constraints where all coefficients are in $\{1, 0, -1\}$. By Cramer’s rule, a common denominator $K$ of all $x_i$ is (the absolute value of) the determinant of an $O(2^{|V|}) \times O(2^{|V|})$ matrix whose entries are all in $\{1, 0, -1\}$. Thus, $K = O((2^{|V|})!)$.
\end{proof}

\section{Missing Details from \autoref{sec:tight}} \label{app:tight}

\subsection{Complete Bipartite Graphs} \label{app:tight:bipartite}
\begin{proof}[Proof of \autoref{tight:K2l}]
We define the embedding $\psi$ from a $(2\ell - 1)$-clique as follows: $\psi^{-1}(x_1) = \{1, \dots, \ell-1\}$,  $\psi^{-1}(x_2) = \{\ell, \dots, 2\ell-2\}$, and $\psi^{-1}(y_i) = \{i, \ell+i-1\}$ for $1 \leq i \leq \ell-1$, while $\psi^{-1}(y_\ell) = \{2\ell-1\}$. To show that $\psi$ is an embedding, we observe that $\psi(2\ell - 1) = \{y_{\ell}\}$ and for each $i \in [2\ell - 2]$, $\psi(i)$ contains exactly two vertices, one from $\{x_1, x_2\}$ and the other from $\{y_1, \dots, y_n\}$, so $\psi(i)$ induces an edge as a subgraph. Thus, for any $i, j \in [2\ell - 1]$, $\psi(i)$ and $\psi(j)$ touch because there is an edge $(u, v)$, where $u \in \psi(i) \cap \{x_1, x_2\}, v \in \psi(j) \cap \{y_1, \dots, y_n\}$, that intersects both $\psi(i)$ and $\psi(j)$. It is easy to see that $\mathsf{wed}(\psi) = \ell$, thus $\emb_{2 \ell - 1} \geq 2 - 1/\ell$.

Next, we show that $\subw(K_{2,\ell}) \leq 2 - 1/\ell$. There are only two proper tree decompositions for $K_{2,\ell}$, where the first one has $\ell$ bags $\{x_1,x_2, y_1\}, \dots, \{x_1, x_2, y_\ell\}$ and the second one has two bags $\{x_1, y_1, \dots, y_\ell\}, \{x_2, y_1, \dots, y_\ell\}$. Let $h$ be any edge-dominated submodular function. 
\begin{enumerate}
    \item [(1)] $h(x_i) \leq \theta$ for some $i \in \{1, 2\}$. WLOG we assume $h(\{x_1\}) \leq \theta$. Then, for any bag in the first decomposition, i.e., $\{x_1, x_2, y_i\}$, $i \in [\ell]$, we have 
    $$ h(\{x_1, x_2, y_i\}) \leq h(\{x_1\}) + h(\{ x_2, y_i\}) \leq \theta + 1
    $$
    \item [(2)] $h(x_i) > \theta$ for any $i \in \{1, 2\}$. Then, for any of the two bags in the second decomposition, say $\{x_1, y_1, \ldots, y_{\ell}\}$, we have 
    \begin{align*}
        h(\{x_1, y_1, \ldots, y_{\ell}\}) & \leq h(\{x_1, y_1\}) + h(\{x_1, y_2, \ldots, y_{\ell}\}) - h(\{x_1\}) \\
        & \leq h(\{x_1, y_1\}) + h(\{x_1, y_2\}) + h(\{x_1, y_3, \ldots, y_{\ell}\}) - 2 h(\{x_1\}) \\
        & \cdots  \\
        & \leq \sum_{i = 1}^{\ell - 1} h(\{x_1, y_i\}) - (\ell - 1) h(\{x_1\}) \\
        & \leq \ell - (\ell - 1) \theta
    \end{align*}
\end{enumerate}
Setting $\theta = 1 - 1/ \ell$, we conclude that $\subw(K_{2,\ell}) \leq 2 - 1/\ell$. Therefore, $\mathsf{emb}_{2\ell-1} = \subw =  2-1/\ell$ by \autoref{prop:lowerbound}.
\end{proof}

\begin{proof}[Proof of \autoref{tight:K33}]
   We first show that $\emb(K_{3, 3}) \geq 2$ by constructing an embedding $\psi$ from a $8$-clique to $K_{3, 3}$, where $\psi$ is defined as follows:
   \begin{align*}
       &\psi^{-1}(x_1) = \{1, 3, 5\}, \quad \psi^{-1}(x_2) = \{2, 4, 6\}, \quad \psi^{-1}(x_3) = \{7, 8\}  \\
       &\psi^{-1}(y_1) = \{1, 2\}, \quad \quad \psi^{-1}(y_2) = \{3 ,4\}, \quad \quad \psi^{-1}(y_3) = \{5, 6\} 
   \end{align*}
   Therefore,  $\emb \geq 8 / \mathsf{wed}(\psi) = 8 / 4 = 2$.
   
   Next, we show that $\subw(K_{3, 3}) \leq  2$. We take two tree decompositions, where the first decomposition has bags $\{x_1, x_2, x_3, y_i\}$, where $i \in [3]$ and the second decomposition has bags $\{y_1, y_2, y_3, x_i\}$, where $i \in [3]$. For an arbitrary edge-dominated $h \in \Gamma_3$, we assume WLOG in each decomposition, the maximum value of $h$ is attained when $i = 1$. We observe that
   \begin{align*}
       h(\{x_1, x_2, x_3, y_1\}) &\leq h(\{x_3, y_1\}) + h(\{x_1, x_2, y_1\}) - h(\{y_1\})  \leq  1 + h(\{x_1, x_2, y_1\}) - h(\{y_1\})   \\
       h(\{y_1, y_2, y_3, x_1\}) & \leq h(\{y_3, x_1\}) + h(\{y_1, y_2, x_1\}) - h(\{x_1\})  \leq  1 + h(\{y_1, y_2, x_1\}) - h(\{x_1\}) 
   \end{align*}
   Taking the sum of the above two inequalities, we get 
   \begin{align*}
       h(\{x_1, x_2, x_3, y_1\}) + h(\{y_1, y_2, y_3, x_1\}) & \leq 2 + h(\{x_1, x_2, y_1\}) - h(\{x_1\}) + h(\{y_1, y_2, x_1\}) - h(\{y_1\}) \\
       & \leq 2 + h(\{x_2, y_1\}) + h(\{y_2, x_1\}) \\
       & \leq 4
   \end{align*}
   Then, it holds that $\subw(K_{3, 3}) \leq \min \{h(\{x_1, x_2, x_3, y_1\}),  h(\{x_1, x_2, x_3, y_1\})\} \leq 2$. We close the proof by applying \autoref{prop:lowerbound}.
\end{proof}

\subsection{Chordal Queries} \label{app:tight:chordal}

\begin{proof}[Proof of \autoref{lem:properTD}]
Let $G$ be the clique-graph of $\mH$. Lemma 5.4 in~\cite{CarmeliKKK21} states that there is a one-to-one correspondence between the set of bags in a proper tree decomposition (of $G$) and the set of possible minimal triangulations of $G$. Thus, we proceed to prove the following claim: if a graph $G$ has only one minimal triangulation, then $G$ is chordal and so is $\mH$.

We show the contrapositive of the claim. Suppose the graph $G= (V, E)$ is not a chordal graph, then it has at least one minimal triangulation. We take one fill edge in this minimal triangulation, called $e$, and show that we can construct another minimal triangulation without taking $e$ as a fill edge. Indeed, we can start from a $|V|$-clique and remove $e$ from it. The resulting graph is called an almost-clique and shown to be chordal in~\autoref{app:tight}. Since $e \notin E$, this almost-clique is a triangulation of the original graph $G$ and we can keep removing fill edges from this triangulation till it becomes a minimal triangulation of $G$ without $e$. This process shows that there are at least two distinct minimal triangulations of $G$. This finishes the proof.
\end{proof}

\section{Missing Details from \autoref{sec:boat}} \label{app:gaps}
Before proving \autoref{prop:subwboat}, we show the following helper lemma.

\begin{lemma} \label{lem:4vertices}
Any proper tree decomposition of $Q_{hb}$ contains a bag that has at least $4$ vertices, two from $\{y_1, y_2, y_3\}$, two from $\{z_1, z_2, z_3\}$.
\end{lemma}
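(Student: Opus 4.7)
The plan is to analyze any tree decomposition $(\htree,\chi)$ of $Q_{hb}$ by zooming in on the path in $\htree$ between the two hyperedges. Since $\{y_1,y_2,y_3\}$ and $\{z_1,z_2,z_3\}$ are hyperedges, there must exist bags $B_y$ and $B_z$ with $\{y_1,y_2,y_3\}\subseteq B_y$ and $\{z_1,z_2,z_3\}\subseteq B_z$. (If $B_y=B_z$ we are immediately done, so assume they are distinct.) Let $P = B^0,B^1,\ldots,B^k$ be the unique path in $\htree$ with $B^0=B_y$ and $B^k=B_z$. My goal is to locate a single bag $B^j\in P$ that contains at least two $y$-vertices and at least two $z$-vertices.

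First I would establish that, for each $i\in\{1,2,3\}$, \emph{some bag on $P$ contains both $y_i$ and $z_i$}. The edge $\{y_i,z_i\}$ is covered by some bag $B_i$; if $B_i\in P$ we are done. Otherwise let $V_i$ be the unique vertex on $P$ that is closest to $B_i$ in $\htree$. The path from $B_y$ to $B_i$ passes through $V_i$ and lies entirely in the connected subtree of bags containing $y_i$, so $y_i\in V_i$. A symmetric argument using the path from $B_z$ to $B_i$ gives $z_i\in V_i$, proving the claim.

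Next I would define $a_i = \max\{j : y_i \in B^j\}$ and $b_i = \min\{j : z_i \in B^j\}$. By running intersection, $y_i\in B^j$ for every $j\le a_i$ and $z_i\in B^j$ for every $j\ge b_i$, and the previous step forces $b_i\le a_i$ for every $i$. Reindexing so that $a_1\le a_2\le a_3$ (this simultaneously permutes the $z$-partners since the pairs $(y_i,z_i)$ move together), I would inspect $B^{a_2}$: it contains $y_2$ and $y_3$ because $a_2\le a_2\le a_3$, and it contains $z_1$ and $z_2$ because $b_1\le a_1\le a_2$ and $b_2\le a_2$. This yields the required four vertices, two from each side.

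The main conceptual obstacle is the ``handoff'' step: along $P$ the multiset of $y$'s is monotonically shrinking while the multiset of $z$'s is monotonically growing, yet there are three independent pairings $\{y_i,z_i\}$ that each must be simultaneously witnessed in some bag on $P$. The content of the argument above is that these three handoffs cannot all be staggered far enough apart to avoid a moment when two $y$'s and two $z$'s co-exist, and sorting by $a_i$ singles out the correct witness $B^{a_2}$ without any case analysis.
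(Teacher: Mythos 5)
Your proof is correct, and it takes a genuinely different route from the paper's. The paper invokes the correspondence between bags of a proper tree decomposition and maximal cliques of a minimal triangulation of the clique-graph (Lemma 5.4 of~\cite{CarmeliKKK21} together with the Rose--Tarjan--Lueker characterization of minimal triangulations), and then exhaustively case-analyzes which chords can be filled in the triangular prism. You instead argue directly on the tree: take the path $P$ between the bag covering $\{y_1,y_2,y_3\}$ and the bag covering $\{z_1,z_2,z_3\}$, observe that each matching edge $\{y_i,z_i\}$ must be witnessed by some bag on $P$ (the attachment-point argument for $V_i$ is sound, since both the $y_i$-subtree and the $z_i$-subtree must pass through it), note that the occurrences of each $y_i$ and $z_i$ along $P$ form prefixes and suffixes respectively with $b_i\le a_i$, and then sort by $a_i$ to exhibit $y_2,y_3,z_1,z_2\in B^{a_2}$. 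Each step checks out. What your approach buys: it is self-contained and elementary (no triangulation machinery), it avoids case analysis, and it actually proves the statement for \emph{every} tree decomposition, not just proper ones, which is strictly stronger and still suffices for the application in Proposition~\ref{prop:subwboat}. What the paper's approach buys is coherence with the surrounding sections, which already set up the proper-decomposition/minimal-triangulation dictionary for chordal queries and reuse it in Lemma~\ref{lem:properTD}.
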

\begin{proof}
   Take any proper tree decomposition of $Q_{hb}$. By \autoref{lem:properTD}, it is also a proper tree decomposition of the clique-graph of $Q_{hb}$. Lemma 5.4 in~\cite{CarmeliKKK21} states that there is a one-to-one correspondence between the set of bags in a proper tree decomposition and the set of possible minimal triangulations of $G$ and furthermore, following \autoref{lem:maxclq}, the set of bags in the proper tree decomposition is exactly the set of maximal cliques after a minimal triangulation. Therefore, we only need to show the following statement: for any minimal triangulation of the clique-graph of $Q_{hb}$, there is a $4$-clique that contains at least $4$ vertices, two from $\{y_1, y_2, y_3\}$, two from $\{z_1, z_2, z_3\}$.

   Let us, WLOG, fill $\{y_2, z_3\}$ as a chord for the $4$-cycle $(y_2, y_3, z_3, z_2)$. For the $4$-cycle $(y_1, y_2, z_2, z_1)$, there are two cases:
   \begin{enumerate}
       \item fill $\{y_2, z_1\}$ as a chord for the $4$-cycle $(y_1, y_2, z_2, z_1)$: now for the $4$-cycle $(y_1, y_3, z_3, z_1)$, assume WLOG the chord $\{y_1, z_3\}$ is filled. This implied that $\{y_1, y_2, z_1, z_3\}$ forms a $4$-clique after this minimal triangluation.
       \item fill $\{y_1, z_2\}$ as a chord for the $4$-cycle $(y_1, y_2, z_2, z_1)$: in this case, consider the $4$-cycle $(y_1, y_3, z_3, z_2)$. Rose, Tarjan, and Lueker~\cite{RoseTL76} show that a triangulation is minimal if and only if every filled edge is the unique chord of a 4-cycle. This implies that only the chord $\{y_1, z_3\}$ can be filled now (not $\{y_3, z_2\}$), in order to be a minimal triangulation, which leads to a $4$-clique $\{y_1, y_2, z_2, z_3\}$.
   \end{enumerate}
   Since we have exhausted all possible minimal triangulations, the proof is finished. 
\end{proof}

Next, we show a formal proof of \autoref{prop:subwboat}.
\begin{proof}[Proof of \autoref{prop:subwboat}]
   To see that $\subw(Q_{hb}) \leq 2$, we note that there is fractional edge cover of $Q_{hb}$ that assigns weight $1$ and $\{y_1, y_2, y_3\}$ and $\{z_1, z_2, z_3\}$ and gets total weight of $2$. Thus, for $Q_{hb}$, $\subw \leq \fhw \leq 2$.

   Next, we show that $\subw(Q_{hb}) \geq 2$. We fix an edge-dominated submodular function defined on $V(Q_{hb})$  as follows,
   \begin{align*}
       & h(\emptyset) = 0, \quad  h(y_i) = h(z_i) = 1/2, \quad i \in [3] \\
       & h(e) = 1, \quad e \in E \\
       & h(\{y_1, y_2, y_3, z_i\}) = h(\{z_1, z_2, z_3, y_i\}) = 3/2, \quad i \in [3] \\
       & h(\{y_1, y_2, y_3, z_i, z_j\}) = h(\{z_1, z_2, z_3, y_i, y_j\}) = h(\{z_1, z_2, z_3, y_1, y_2, y_3\}) = 2, \quad i, j \in [3], i \neq j \\
       & h(\{u, v\}) = 1, \quad u, v \in V, u \neq v \\
       & h(\{y_i, y_j, z_k\}) = h(\{y_k, z_i, z_j\}) = 3/2, \quad i, j, k \in [3], i \neq j \\
       & h(\{y_i, y_j, z_k, z_{\ell}\}) = 2 , \quad i, j, k, \ell \in [3], i \neq j, k \neq \ell.
   \end{align*}
   By \autoref{lem:4vertices}, we know that for any proper decomposition, there is one bag $B$ such that $h(B) \geq h(\{y_i, y_j, z_k, z_{\ell}\}) = 2$, for some $i, j, k$, where $\ell \in [3], i \neq j, k \neq \ell$. Therefore, we have shown that $\subw(Q_{hb}) \geq 2$. Together, we have proved the claim. 
\end{proof}

\end{document}